\documentclass[11pt,a4paper]{article}

\usepackage[utf8]{inputenc}
\usepackage[T1]{fontenc}
\usepackage{lmodern}

\usepackage{amsmath,amssymb,amsthm}
\usepackage{mathtools}
\usepackage{bm}

\usepackage[margin=2.5cm]{geometry}
\usepackage{setspace}
\usepackage{parskip}

\usepackage{graphicx}
\graphicspath{{figures/}}

\usepackage{booktabs}
\usepackage{array}
\usepackage{tabularx}
\usepackage[section]{placeins}   %
\usepackage{enumitem}

\usepackage[colorlinks=true,linkcolor=blue,citecolor=blue,urlcolor=blue]{hyperref}
\usepackage[capitalise,noabbrev]{cleveref}

\usepackage[authoryear,round]{natbib}
\newtheorem{theorem}{Theorem}

\theoremstyle{definition}
\newtheorem{definition}{Definition}

\theoremstyle{remark}
\newtheorem*{remark}{Remark}

\newcommand{\rate}{r}                 %
\newcommand{\sched}{\tau}             %
\newcommand{\tw}{\tau_w}              %
\newcommand{\dd}{\mathrm{d}}
\newcommand{\RR}{\mathbb{R}}

\newcommand{\EE}{\mathbb{E}}

\newcommand{\Pcal}{\mathcal{P}}
\newcommand{\Wtwo}{W_2}
\newcommand{\Neut}{\mathcal{N}}
\newcommand{\Schd}{\mathcal{S}}

\newcommand{\Free}{F}                 %
\newcommand{\dF}{\Delta F}            %

\newcommand{\one}{\mathbf{1}}
\DeclareMathOperator*{\argmin}{arg\,min}

\title{Minimum-Distortion Wealth Taxation, I:\\
Information-Theoretic versus Transport-Geometric\\
Optimality on the Proportional Class}
\author{Anders G.\ Fr{\o}seth\thanks{Independent Researcher.\
  E-mail: \href{mailto:indrefjorden@pm.me}{indrefjorden@pm.me}.}}
\date{\today}

\begin{document}
\maketitle

\begin{abstract}
\noindent
We characterise minimum-distortion wealth taxation under two
contrasting normative criteria within a Fokker--Planck framework
on log-wealth: the JKO free-energy gap, an
information-theoretic measure aligned with the Mirrleesian
decision-distortion tradition, and the squared 2-Wasserstein
distance from the no-tax distribution at horizon $T$, a
transport-geometric measure aligned with the Saez--Zucman
distributional-compression tradition. Restricting to the
neutrality-preserving (C1)--(C3) schedule class of
\citet{Froeseth2026F}, both optima admit closed forms in the
two-dimensional design plane parametrised by the
corporate--dividend retention $k = (1-\tau_c)(1-\tau_d)$ and the
proportional wealth-tax rate $\tw$. The JKO optimum partitions
the regime axis into three phases as a function of the
dimensionless ratio $\rho = \Sigma_0\,m_0/\sigma^2$, with
$m_0 = \mu - \sigma^2/2$ the geometric mean log-return: a pure
wealth-tax phase at low $\rho$, a mixed-instrument phase at
intermediate $\rho$, and a pure flow-tax phase at high $\rho$.
The $W_2$ optimum, by contrast, is degenerate in this
calibration: it pins to the pure flow-tax corner across the
whole regime axis. The criterion contrast admits an
economically meaningful reading via a \emph{bluntness index}
$B(m_0) = b/(a m_0)$ that measures the wealth-tax channel's
mean-displacement-per-revenue overshoot relative to the flow-tax
channel; JKO weights $B$ linearly, $W_2$ weights it
quadratically, and the two normative traditions correspond to
this difference in weighting. Norwegian-flavoured calibrations
sit inside the JKO mixed-instrument phase under stock-heavy
portfolio volatility but move into the pure flow-tax phase under
the realised effective volatility of typical real-estate-heavy
households. We do not in this paper compare directly to the
bracket-based wealth-tax schedules in actual use; that
comparison requires a piecewise-Gaussian extension of the
present analysis, sketched as a companion paper.
\end{abstract}

\section{Introduction}\label{sec:intro}

\subsection{The question}\label{sec:question}

The political debate around wealth taxation has been arrayed along a
familiar normative axis for decades, with four arguments structuring
the standard exchange. Proponents typically advance two claims.
\emph{First}, that capital owners accumulate disproportionate
influence and political power relative to wage earners, and that
taxation of the capital \emph{stock} --- not merely the income flow
it generates --- is the appropriate instrument through which the
wealthier should contribute a larger share to public revenue
\citep{Piketty2014, PikettyZucman2014}.
\emph{Second}, that a wealth tax is the most direct and effective
instrument for distributional redistribution: where income taxes
track flows that capital owners can defer or restructure, a wealth
tax acts on the stock itself, and is therefore the central
redistributive tool in any serious progressive programme
\citep{SaezZucman2019, BlanchetFournierPiketty2022}.
Opponents typically advance two countervailing claims.
\emph{Third}, that a wealth tax punishes entrepreneurship and
discourages the deployment of risky capital, because it taxes the
position rather than the realised return and falls heaviest on
risk-bearing assets at exactly the points in the wealth lifecycle
where allocation decisions are most consequential \citep{Mirrlees2011,
SaezStantcheva2018}. \emph{Fourth}, that a wealth tax distorts
capital allocation and disadvantages domestic investors relative to
foreign investors who do not face the same levy in their home
jurisdictions, with empirical responses documented in
\citet{JakobsenEtAl2020} and the Norwegian B{\o} reform debate.

These four arguments share a common structure once translated into
the language of distortion-minimisation. The two pro-tax arguments
say, in effect, that a wealth tax should be designed to minimise the
\emph{distributional} perturbation from the no-tax counterfactual
--- to compress the upper tail, or to move households toward greater
equality at finite horizon. The two anti-tax arguments say, in
effect, that a wealth tax should be designed to minimise the
\emph{allocational} perturbation --- not to redirect the household's
portfolio decisions, and not to favour foreign over domestic
capital. Both normative positions are defensible. Neither is
entailed by the other, and as we will show, they correspond to two
mathematically distinct distortion criteria. The
distributional-compression position aligns with the
2-Wasserstein distance between post-tax and no-tax distributions
($\Wtwo^2$) --- the natural geometric metric on probability
measures --- and with the Saez--Zucman tradition in optimal-tax
theory. The allocational-neutrality position aligns with the
JKO free-energy gap --- the natural information-theoretic
distortion in the Fokker--Planck framework --- and with the
Mirrleesian decision-distortion tradition. Section~\ref{sec:why-these-two}
develops the criterion choice in detail; Section~\ref{sec:traditions}
develops the connection to the two normative traditions.
The political debate is then a debate between two implicit
objective functions, both well-defined, and the choice between
them is a normative one that the present paper does not adjudicate.

What the paper does is make those two objectives precise,
characterise the wealth-tax schedule that minimises each, and prove
that the two optima are governed by a single dimensionless ratio
that distinguishes regimes in which they coincide from regimes in
which they diverge.

\subsection{Two distortion criteria}\label{sec:two-criteria}

Working within the Fokker--Planck framework of \citet{Froeseth2026S,
Froeseth2026F} on log-wealth, we restrict attention to the
\emph{(C1)--(C3) class} of \citet{Froeseth2026F} --- tax schedules
satisfying three conditions on the underlying ownership-tax system:
the capital-income tax rate matches the corporate tax rate
($\tau_k = \tau_c$); the shielding rate matches the risk-free rate
($r_s = r_f$); and the wealth-tax assessment is uniform across
asset classes ($\alpha_i = \alpha$). Table~\ref{tab:c1c3} of
Section~\ref{sec:c1c3} states each condition formally with its
effect.
Together the three conditions imply that the post-tax wealth
process inherits the multiplicative structure of the pre-tax one,
with the effect on log-wealth at finite horizon being a
\emph{drift-shift-and-rescale}: a uniform shift from the wealth
tax and a uniform rescaling of excess drifts by the factor
$k = (1 - \tau_c)(1 - \tau_d)$ from the corporate--dividend
flow-tax pair. The design space is parametrised by this pass-through
factor $k \in (0, 1]$ together with the proportional wealth-tax
rate $\tw \in [0, \tw^{\max}]$. By
\citet{Froeseth2026N, Froeseth2026E, Froeseth2026F} this is exactly
the schedule class that preserves \emph{neutrality} with respect to
portfolio choice under homogeneous returns and CRRA preferences:
the household's risk-bearing decisions are not redirected by a tax
in this class. Schedules outside (C1)--(C3) break neutrality.

\paragraph{The design parameters are direct policy levers.}
The substantive importance of this decomposition cannot be
overstated: the design parameters $(k, \tw)$ are not abstract
optimisation variables, but the actual policy levers in the
ownership-tax debate. The wealth-tax rate $\tw$ is the
proportional rate the public discussion centres on. The pass-through
factor $k = (1-\tau_c)(1-\tau_d)$ is the share of pre-tax returns
that survives the corporate--dividend tax sequence: $k = 1$
corresponds to no flow taxation, $k = 0.5$ corresponds to a
combined corporate--dividend burden of about half pre-tax
returns. Choosing a point in $\Schd_C = (0, 1] \times [0, \tw^{\max}]$
\emph{is} choosing an actual ownership-tax package satisfying
(C1)--(C3). This is the link between the framework's analytical
content and the political-economy debate of Section~\ref{sec:question}:
arguments about the wealth tax and arguments about
corporate--dividend taxation are arguments about the two axes of
$\Schd_C$.

Within this class, we characterise the schedule that minimises each
of two distortion criteria.

The \emph{Jordan--Kinderlehrer--Otto (JKO) free-energy gap}
\citep{JordanKinderlehrerOtto1998}
\begin{equation}\label{eq:dF-intro}
\dF[\sched] \;:=\; \Free[p_T^{\sched}] - \Free[p_T^0],
\qquad
\Free[p] \;=\; \int p \log p \,\dd y + \int V_Y^0\, p \,\dd y,
\end{equation}
is the information-theoretic distortion measure native to the
Fokker--Planck framework. It measures how far the post-tax population
distribution at horizon $T$ has been pushed from the no-tax counterfactual
in entropy-plus-potential terms. The JKO criterion is the natural
formalisation of \emph{minimum decision-distortion}: the policy that
minimises $\dF$ at matched revenue is the policy that least disturbs
the natural Fokker--Planck flow of household wealth, and therefore
the one that least redirects the household's allocation and
risk-bearing decisions. The two anti-tax arguments above are
implicit JKO concerns.

The \emph{squared 2-Wasserstein distance} to the no-tax distribution
\begin{equation}\label{eq:W2-intro}
\Wtwo^2(p_T^{\sched}, p_T^0)
\;=\; \int_0^1 \big[F_T^{\sched,-1}(u) - F_T^{0,-1}(u)\big]^2 \dd u,
\end{equation}
introduced in this context by \citet{Froeseth2026W}, is the
transport-geometric distortion measure: it measures how much
``mass'' must be moved from the no-tax distribution to reach the
post-tax distribution, percentile by percentile. The W$_2$ criterion is
the natural formalisation of \emph{minimum distributional
perturbation}: the policy that minimises $\Wtwo^2$ at matched
revenue is the policy that keeps every percentile of the wealth
distribution closest to its no-tax counterpart. The two pro-tax
arguments above are implicit W$_2$ concerns when restricted to
revenue-matched schedules.

These two criteria are not the same. JKO weights perturbations to
the distribution linearly in both the mean shift and the (relative)
variance shift, while W$_2$ weights both quadratically. The
quantitative consequence is that the criteria pick different optima
within the same (C1)--(C3) design space.

\subsection{Results}\label{sec:results-intro}

We establish four results. \emph{First} (Theorems~\ref{thm:wellposed} and~\ref{thm:jkoopt}), the constrained minimisation of each criterion within the (C1)--(C3) class admits a unique closed-form optimum; the JKO optimum has the structure ``linear in $\tw$, log-derivative in $k$,'' and the W$_2$ optimum is quadratic in both. \emph{Second} (Theorem~\ref{thm:crossover}), the JKO optimum partitions the regime axis into three phases as a function of the dimensionless ratio
\begin{equation}\label{eq:rho-intro}
\rho \;:=\; \Sigma_0 \cdot m_0/\sigma^2
\;=\; \sqrt{v_0 + \sigma^2 T}\,\frac{\mu - \sigma^2/2}{\sigma^2}
\end{equation}
combining the horizon variance $\Sigma_0$, the log-wealth drift
$m_0$, and the volatility squared: a pure wealth-tax phase at low
$\rho$, a mixed-instrument phase at intermediate $\rho$, and a
pure flow-tax phase at high $\rho$. The two optima coincide as
$\rho \to \infty$ and diverge as $\rho \to 0$. \emph{Third}, the
phase structure has a phase-transition reading
(Section~\ref{sec:phase-transition}): the wealth-tax revenue share
$\phi^{\star} = a\,\tau_w^{\star}/R^{\star}$ acts as an order
parameter, $\rho$ as a control parameter, and the phase
boundaries $\rho_{\rm low}, \rho_{\rm high}$ as kinks in
$\phi^{\star}(\rho)$. The corresponding $W_2$ analysis on
(C1)--(C3) yields a degenerate diagram with no transition --- a
``phase-richness contrast'' between the two criteria.
\emph{Fourth} (calibration in Section~\ref{sec:calibration}), the
Norwegian-flavoured regime sits at $\rho \approx 0.23$, inside
the mixed-instrument phase under stock-heavy portfolio
volatility, but moves toward the corner phases when effective
volatility is computed from typical real-estate-heavy household
portfolios.

The paper does not adjudicate between the two criteria, and it
does not in this paper attempt a direct comparison between the
JKO recommendation and the actual schedules in force in Norway,
Switzerland, Spain, or France --- those are bracket-based and lie
outside the proportional (C1)--(C3) class on which the closed
forms here are derived. The matching empirical exercise lives on
the bracket sub-class, the subject of a companion paper sketched
in Section~\ref{sec:open-questions}. What this paper does is articulate
the normative split between two distortion criteria as a precise
mathematical split, characterise the optimal proportional
schedule under each, and identify when the two sides converge and
when they diverge across the regime axis.

\subsection{Placement in the corpus}\label{sec:placement}

The present paper is the FP-side companion to \citet{Froeseth2026W}
(``the Wasserstein paper''). \citet{Froeseth2026W} establishes the
W$_2$ optimality of the progressive schedule
$\sched^{\star}(x) = \lambda x^2$ over the full admissible class
$\Schd$, including schedules outside (C1)--(C3) that break
neutrality. The present paper restricts to (C1)--(C3) and asks how
the JKO and W$_2$ criteria compare \emph{within} the
neutrality-preserving subclass. Together the two papers identify a
2$\times$2 grid of design problems: criterion JKO or W$_2$, schedule
class (C1)--(C3) or full $\Schd$. The four cells fill in
characterisations of minimum-distortion taxation under different
combinations of normative criterion and admissible-schedule
restrictions.

\paragraph{Part I of a two-paper series.} The present paper
treats the proportional sub-class --- the (C1)--(C3) instruments
parametrised by $(k, \tw) \in \Schd_C$ --- and develops the
phase structure of the JKO optimum in closed form. It mirrors
\citet{Froeseth2026N} in establishing the theory cleanly on a
restricted class, with the same restriction-to-extension structure
that \citet{Froeseth2026E} bears to \citet{Froeseth2026N}. A
companion paper, in preparation, extends the analysis to the
\emph{2-bracket sub-class} $(k, X_0, r)$, the simplest
finite-dimensional schedule class that captures the structure
actually used by Norway, Switzerland, Spain, and France's old ISF
(an exemption threshold combined with marginal rates above). The
extension is technically tractable via piecewise-Gaussian FP
solutions matched at the bracket boundaries. Section~\ref{sec:open-questions}
sketches the recipe and the relationship to the present paper.
Direct comparisons between the recommendations of the present
analysis and the actual schedules in force in any specific
jurisdiction are deferred to that companion paper, since the
proportional class on which the present results are derived is
the wrong functional space in which to make such comparisons.

The paper also draws on \citet{Froeseth2026R} for the drift-design
language used in the discussion of (C3) violations, and on a
forthcoming empirical pipeline for the calibration in
Section~\ref{sec:calibration}. Heterogeneous-returns extensions
(in the spirit of \citet{Froeseth2026H}) and migration-channel
considerations (\citet{Froeseth2026C}) are noted as orthogonal
extensions in the discussion.

\section{Setup}\label{sec:setup}

The symbols used throughout the paper are summarised in
Table~\ref{tab:notation}; the rest of Section~\ref{sec:setup} introduces them
in context.

\begin{table}[!htbp]
\centering
\small
\begin{tabular}{l p{0.78\textwidth}}
\toprule
\multicolumn{2}{l}{\textbf{Wealth process and parameters}} \\
\midrule
$X_t$ & Pre-tax wealth process; $X_0$ has distribution $\pi_0$. \\
$Y_t = \log X_t$ & Log-wealth process. \\
$\mu, \sigma$ & GBM drift and volatility on raw wealth. \\
$m_0 := \mu - \sigma^2/2$ & Drift on log-wealth under no taxation; equivalently the
                              \emph{expected log-wealth growth rate} (volatility-corrected
                              geometric mean return) of pre-tax wealth. \\
$T$ & Finite-horizon design parameter. \\
$\pi_0 \in \Pcal_2(\RR_+)$ & Initial wealth distribution. \\
$\mu_0, v_0$ & Mean and variance of $\log X_0$ when $X_0$ is log-normal. \\
\midrule
\multicolumn{2}{l}{\textbf{Population distributions}} \\
\midrule
$p_t^0$ & No-tax distribution of $Y_t$ at time $t$ (Fokker--Planck flow under no tax). \\
$p_t^{\sched}$ & Post-tax distribution of $Y_t$ at time $t$ under schedule $\sched$. \\
$M_t^{\sched}, \Sigma_t^{\sched}$ & Mean and standard deviation of $p_t^{\sched}$ when Gaussian. \\
$\Sigma_0 := \sqrt{v_0 + \sigma^2 T}$ & Standard deviation of $Y_T$ under no tax. \\
\midrule
\multicolumn{2}{l}{\textbf{Tax instrument and schedule classes}} \\
\midrule
$\rate(x)$ & Wealth-tax rate per unit wealth
                  (rate convention of \citet{Froeseth2026N, Froeseth2026F}). \\
$\sched(x) := \rate(x) \cdot x$ & Tax schedule: total tax paid by a holder of wealth $x$
                                  per unit time (schedule convention of
                                  \citet{Froeseth2026W} and the present paper). \\
$\tw$ & Proportional wealth-tax rate (the constant rate in the
        (C1)--(C3) class). \\
$k = (1-\tau_c)(1-\tau_d)$ & Pass-through factor (corporate--dividend flow-tax product),
                              $k \in (0, 1]$. Post-tax log-wealth volatility is $\sqrt{k}\,\sigma$. \\
$\Schd$ & Full admissible schedule class. \\
$\Schd_C := (0, 1] \times [0, \tw^{\max}]$ & (C1)--(C3) design space. \\
$\Schd_R \subset \Schd_C$ & Revenue-matched constraint set ($\bar R = R^{\star}$). \\
$\Neut$ & Neutral schedule class of \citet{Froeseth2026N}. \\
\midrule
\multicolumn{2}{l}{\textbf{Distortion criteria and reference quantities}} \\
\midrule
$V_Y^0(y) := -m_0 y/\sigma^2$ & Reference potential (no-tax FP gradient flow). \\
$\Free[p] = \int p \log p + \int V_Y^0 p$ & JKO free-energy functional. \\
$\dF[\sched] := \Free[p_T^{\sched}] - \Free[p_T^0]$ & Terminal-gap JKO criterion. \\
$\Wtwo^2(p, q)$ & Squared 2-Wasserstein distance between distributions
                    $p$ and $q$. \\
$R, R^{\star}$ & Time-averaged revenue rate; revenue target. \\
$a, b$ & Linear-approximation revenue weights:
        $R \approx a\,\tw + b\,(1-k)$. \\
\midrule
\multicolumn{2}{l}{\textbf{Crossover}} \\
\midrule
$\rho := \Sigma_0\,m_0/\sigma^2$ & Dimensionless crossover ratio
                                    governing the JKO/$W_2$ split
                                    (Section~\ref{sec:crossover}). \\
\bottomrule
\end{tabular}
\caption{Summary of notation. Variables introduced once and used
locally are not listed; see the surrounding text at first
appearance.}
\label{tab:notation}
\end{table}

\subsection{Pre-tax dynamics}

We work throughout in log-wealth coordinates $Y = \log X$. Pre-tax
wealth $X_t$ follows GBM with drift $\mu$ and volatility $\sigma$:
\begin{equation}\label{eq:sde-X}
\dd X_t \;=\; \mu X_t \,\dd t + \sigma X_t \,\dd W_t, \qquad
X_0 \sim \pi_0.
\end{equation}
By It\^o's lemma, $Y$ satisfies
\begin{equation}\label{eq:sde-Y}
\dd Y_t \;=\; m_0 \,\dd t + \sigma\, \dd W_t, \qquad
m_0 := \mu - \sigma^2/2,
\end{equation}
with constant diffusion in log-wealth.

\subsection{Notation: rate versus schedule}\label{sec:notation}

The corpus uses two different conventions for tax-schedule
notation, and reconciling them needs to be explicit.
\citet{Froeseth2026N, Froeseth2026E, Froeseth2026F} use $\tau$
(or $\tw$) for the \emph{wealth-tax rate}: a holder of wealth
$x$ pays $\tw \cdot x$ per unit time. \citet{Froeseth2026W} (the
Wasserstein paper) uses $\tau(x)$ for the \emph{tax schedule}
(total amount paid).

We adopt an explicit two-symbol convention. Let $\rate(x)$ denote
the wealth-tax rate per unit wealth per unit time, consistent with
the $\tw$-as-rate convention of \citet{Froeseth2026N}. Let
$\sched(x) := \rate(x) \cdot x$
denote the tax schedule (total tax paid by a holder of wealth $x$
per unit time), consistent with the variational object of
\citet{Froeseth2026W}. Each policy can be described cleanly via
either object:

\begin{center}
\begin{tabular}{lll}
\toprule
Policy & Rate $\rate(x)$ & Schedule $\sched(x)$ \\
\midrule
Proportional wealth tax & $\tw$ (constant) & $\tw\,x$ \\
W$_2$-optimal of \citet{Froeseth2026W} & $\lambda x$ & $\lambda x^2$ \\
Two-bracket (threshold $x_0$, rate $r_0$) & $r_0\,\one\{x \ge x_0\}$ & $r_0\,(x - x_0)_+$ \\
\bottomrule
\end{tabular}
\end{center}

The drift modification on log-wealth from a stock tax is determined
by the rate: $\delta v_Y(y) = -\rate(e^y)$.

\subsection{Tax instruments and the (C1)--(C3) class}\label{sec:c1c3}

A flow + stock tax instrument $\sched$ lies in the
\emph{(C1)--(C3) class} of \citet{Froeseth2026F} if it satisfies
the three conditions summarised in Table~\ref{tab:c1c3}.

\begin{table}[!htbp]
\centering
\begin{tabular}{c p{0.30\textwidth} p{0.50\textwidth}}
\toprule
\textbf{Condition} & \textbf{Statement} & \textbf{Effect} \\
\midrule
(C1) &
Capital-income tax rate equals corporate tax rate:
$\tau_k = \tau_c$. &
Removes the differential between taxation of dividends paid out
versus capital income retained, so that the after-tax excess
return is independent of how a unit of pre-tax return is realised.
\\[4pt]
(C2) &
Shielding rate equals the risk-free rate: $r_s = r_f$. &
The shielding mechanism (Norwegian \emph{skjermingsfradrag}) acts
as a pure deduction at the risk-free rate, equating after-tax
excess returns across assets up to the wealth-tax channel.
Together with (C1) this ensures the equity--debt split is
undistorted.
\\[4pt]
(C3) &
Uniform wealth-tax assessment across assets:
$\alpha_i = \alpha$ for all assets~$i$. &
No asset-class-specific drift modification; the wealth tax acts
as a uniform drift shift on log-wealth. Violations distort the
tangency portfolio.
\\
\bottomrule
\end{tabular}
\caption{The three conditions of the (C1)--(C3) class, restated
from \citet{Froeseth2026F}. \emph{Together} they imply that the
combined tax system acts on the Fokker--Planck equation as a
uniform drift shift (from the wealth tax) plus a uniform rescaling
of excess drifts by the factor $k = (1-\tau_c)(1-\tau_d)$ (from
the corporate--dividend flow-tax pair). The household's
portfolio-choice problem is unchanged by the tax up to this
rescaling; \citet{Froeseth2026F} calls this property
\emph{generalised neutrality}.}
\label{tab:c1c3}
\end{table}

Specifically, \citet{Froeseth2026F} establishes that under
(C1)--(C3) the post-tax log-wealth process takes the
drift-shift-and-rescale form
\begin{equation}\label{eq:sde-Ytax}
\dd Y_t \;=\; \big[ k\,m_0 - \tw \big]\,\dd t + \sqrt{k}\,\sigma\,\dd W_t,
\end{equation}
where the pass-through factor
$k = (1 - \tau_c)(1 - \tau_d) \in (0, 1]$ is determined by the
corporate and dividend tax rates (with $k = 1$ corresponding to
the no-flow-tax case $\tau_c = \tau_d = 0$, and $k \to 0$
corresponding to full pass-through), and the proportional
wealth-tax rate $\tw \in [0, \tw^{\max}]$ enters as a uniform
drift shift on log-wealth.

\paragraph{The link between policy levers and Fokker--Planck channels.}
Equation~\ref{eq:sde-Ytax} makes precise an idea that motivates everything
that follows. The two design parameters $(k, \tw)$ enter the
Fokker--Planck equation through \emph{distinct channels}: the
wealth-tax rate $\tw$ acts purely as a uniform drift shift, while
the corporate--dividend pass-through $k$ acts purely as a
multiplicative rescaling of both drift and diffusion. The drift-shift
mechanism is the substance of \citet{Froeseth2026N}'s neutrality
result for the proportional wealth tax in the single-asset GBM
setting; the rescaling-by-$k$ mechanism is \citet{Froeseth2026F}'s
generalisation when the full ownership-tax system (corporate +
dividend + capital-income + wealth) is in play. The (C1)--(C3) class
is therefore not a \emph{technical} restriction so much as the
\emph{natural decomposition} of how policy operates on the wealth
process: every ownership-tax package satisfying (C1)--(C3) reduces to
some choice of $(k, \tw)$, and conversely every choice of $(k, \tw)$
in $\Schd_C$ corresponds to an actual implementable ownership-tax
package.

This decomposition is what allows the design problem of this paper
to be stated cleanly. We are not optimising over abstract function
spaces; we are choosing a single point in the two-dimensional
$(k, \tw)$-plane, with each axis a familiar policy lever and the
constraint set $\Schd_R$ a one-dimensional curve corresponding to
matched-revenue trade-offs. The substantive question --- how much of
public revenue should come from the wealth-tax channel ($\tw$) and
how much from the corporate--dividend channel ($k$) --- is exactly the
question the JKO and $W_2$ optima of Section~\ref{sec:variational} answer,
each from its own normative criterion.

We denote the resulting design space by
\begin{equation}\label{eq:design-space}
\Schd_{C} \;:=\; (0, 1] \times [0, \tw^{\max}].
\end{equation}

The (C1)--(C3) class is at once a substantive and a technical
restriction. Substantively, it is the class of \emph{neutrality-
preserving} schedules in the sense of \citet{Froeseth2026N} ---
schedules under which the household's risk-bearing decisions are
not redirected by the tax. Technically, it reduces the
infinite-dimensional schedule space to the two-parameter design
space $\Schd_{C}$ above, in which existence and uniqueness of
optima follow from standard convex-analysis arguments
(Section~\ref{sec:wellposed}). Schedules outside (C1)--(C3) break
neutrality and can be analysed in the broader framework of
\citet{Froeseth2026W} or \citet{Froeseth2026R}; the present paper
restricts to (C1)--(C3) by design.

\subsection{Population measure and Fokker--Planck dynamics}\label{sec:fp}

The population is a Borel probability measure $\pi_0 \in
\Pcal_2(\RR_+)$ on initial wealth, with finite second moments on
log-wealth so that $Y_0 \in \Pcal_2(\RR)$. Specific results that
involve closed-form ratios additionally assume $X_0$ to be
log-normal with log-volatility $\sigma_0$; we flag this assumption
at the point of use rather than treating it as standing.

The population law of log-wealth at time $t \in [0, T]$ under no
taxation is the pushforward $p_t^0 := \mathrm{Law}(Y_t \,|\,
\eqref{eq:sde-Y})$, satisfying the Fokker--Planck equation
\begin{equation}\label{eq:fp-notax}
\partial_t p_t^0 \;=\; -m_0\, \partial_y p_t^0
+ \tfrac{\sigma^2}{2}\, \partial_y^2 p_t^0.
\end{equation}
Under a schedule $\sched \in \Schd_C$ the post-tax log-wealth has
modified drift and diffusion as in \eqref{eq:sde-Ytax}, and its
population law $p_t^{\sched}$ satisfies the FP equation with the
modified coefficients.

The free-energy functional $\Free$ of \eqref{eq:Fdef} is the JKO
gradient flow object for \eqref{eq:fp-notax}: the no-tax FP
equation is the steepest-descent flow of $\Free$ in the
2-Wasserstein metric on $\Pcal_2(\RR)$
\citep{JordanKinderlehrerOtto1998, Santambrogio2016, Villani2009}.
This is the structural reason why $\Free$ is the natural
information-theoretic distortion measure: any tax modification of
the FP coefficients perturbs the flow away from the gradient flow
of $\Free$, and the gap $\dF[\sched]$ measures the magnitude of
that perturbation at horizon $T$.

\subsection{Revenue constraint}\label{sec:revenue}

A schedule $\sched$ collects revenue at rate
\begin{equation}\label{eq:rev-rate}
R_t[\sched] \;:=\; \EE\!\big[ \sched(X_t) \big]
\;=\; \EE\!\big[ \rate(X_t) \cdot X_t \big]
\end{equation}
per unit time, where the expectation is under the post-tax law.
For matched-revenue analysis we use the time-averaged revenue rate
\begin{equation}\label{eq:rev-avg}
\bar R[\sched] \;:=\; \frac{1}{T} \int_0^T R_t[\sched]\, \dd t,
\end{equation}
and the revenue-matched constraint set
\begin{equation}\label{eq:S-R}
\Schd_R \;:=\; \big\{ \sched \in \Schd_C \;:\; \bar R[\sched] = R^{\star} \big\},
\end{equation}
where $R^{\star} > 0$ is a target revenue rate. Within $\Schd_C$
the set $\Schd_R$ is one-dimensional (a curve in the
$(k, \tw)$-plane), and the JKO and W$_2$ minimisation problems are
to find points on this curve that minimise the corresponding
distortion criterion.

For analytical convenience we sometimes work with a linear
approximation $R[\sched] \approx a\,\tw + b\,(1-k)$ to the revenue
functional, valid in the small-tax regime. The linear approximation
is sufficient for the closed-form FOCs of Section~\ref{sec:foc}; the full
log-quadratic form is recovered for the calibration of
Section~\ref{sec:calibration}.

\section{The two distortion criteria}\label{sec:criteria}

\subsection{The JKO free-energy gap}\label{sec:jko}

The Fokker--Planck equation \eqref{eq:fp-notax} is the gradient flow,
in the 2-Wasserstein metric on $\Pcal_2(\RR)$, of the
\emph{Jordan--Kinderlehrer--Otto free-energy functional}
\citep{JordanKinderlehrerOtto1998}
\begin{equation}\label{eq:Fdef}
\Free[p] \;=\; \int p(y)\,\log p(y)\,\dd y + \int V_Y^0(y)\,p(y)\,\dd y,
\qquad
V_Y^0(y) \;:=\; -\frac{m_0}{\sigma^2}\,y.
\end{equation}
The reference potential $V_Y^0$ is chosen so that the no-tax FP
flow \eqref{eq:fp-notax} is the steepest-descent flow of $\Free$:
$\partial_t p_t^0 = -\nabla_{\!\Wtwo} \Free[p_t^0]$. The tax
modification of the FP coefficients perturbs the flow away from this
gradient flow, and the JKO terminal gap
\begin{equation}\label{eq:dF}
\dF[\sched] \;:=\; \Free[p_T^{\sched}] - \Free[p_T^0]
\end{equation}
measures the magnitude of that perturbation at horizon $T$. We use
$\dF$ throughout as the JKO criterion. For the convention
discussion of fixed-reference-potential versus moving-reference,
which distinguishes $\dF$ from Kullback--Leibler divergence, see
Section~\ref{sec:why-these-two}.

\paragraph{Closed form for Gaussian distributions.}
Under (C1)--(C3) and a log-normal initial wealth $X_0$ with
$\log X_0 \sim \mathcal{N}(\mu_0, v_0)$, the post-tax log-wealth at
horizon $T$ is Gaussian: $Y_T^{\sched} \sim \mathcal{N}(M_T^{\sched},
\Sigma_T^{\sched\,2})$ with
\begin{equation}\label{eq:gaussian-params}
M_T^{\sched} \;=\; \mu_0 + (k\,m_0 - \tw)\,T,
\qquad
\Sigma_T^{\sched\,2} \;=\; v_0 + k\,\sigma^2\,T.
\end{equation}
The no-tax distribution is the same with $(k, \tw) = (1, 0)$:
$M_T^0 = \mu_0 + m_0 T$, $\Sigma_T^{0\,2} = v_0 + \sigma^2 T$.
Direct computation using
$H(\mathcal{N}(M, \Sigma^2)) = \tfrac12 \log(2\pi e \Sigma^2)$ and
the linear potential $V_Y^0$ gives
\begin{equation}\label{eq:dF-explicit}
\boxed{\;\dF[(k, \tw)] \;=\;
-\tfrac12 \log\!\frac{v_0 + k\sigma^2 T}{v_0 + \sigma^2 T}
\;-\; \frac{m_0^2}{\sigma^2}\,(k-1)\,T
\;+\; \frac{m_0}{\sigma^2}\,\tw\,T.\;}
\end{equation}

\paragraph{Sanity check and properties.}
$\dF[(1, 0)] = 0$, as required: the no-tax point makes the gap
vanish. The first term is strictly convex in $k$ on $(0, 1]$ (its
second derivative is $\sigma^4 T^2 / [2(v_0 + k\sigma^2 T)^2] > 0$);
the second term is linear in $k$; the third is linear in $\tw$. So
$\dF$ is strictly convex in $k$ and jointly convex in $(k, \tw)$,
which underwrites the existence-and-uniqueness arguments of
Section~\ref{sec:wellposed}.

\paragraph{Linear-in-perturbation scaling.}
Around the no-tax point, write $k = 1 - \epsilon_k$ and $\tw =
\epsilon_w$ with both small. The leading-order expansion of
\eqref{eq:dF-explicit} is
\begin{equation}\label{eq:dF-linear}
\dF \;\approx\;
\frac{\sigma^2 T}{2(v_0 + \sigma^2 T)}\,\epsilon_k
\;+\; \frac{m_0^2 T}{\sigma^2}\,\epsilon_k
\;+\; \frac{m_0 T}{\sigma^2}\,\epsilon_w
\;=\; \alpha_k\,\epsilon_k + \alpha_w\,\epsilon_w
\end{equation}
for coefficients $\alpha_k, \alpha_w > 0$. The JKO gap is
\emph{linear} in both perturbations at leading order: a small
variance reduction $\epsilon_k$ costs $\alpha_k\,\epsilon_k$, a
small mean shift $\epsilon_w$ costs $\alpha_w\,\epsilon_w$, and
their relative weight is set by the dimensionless ratio
$\rho := \Sigma_0\,m_0/\sigma^2$ that drives the crossover of
Section~\ref{sec:crossover}.

\subsection{The 2-Wasserstein distance to the no-tax distribution}\label{sec:wtwo}

The squared 2-Wasserstein distance between two probability measures
$p, q \in \Pcal_2(\RR)$ is
\begin{equation}\label{eq:W2-defn}
\Wtwo^2(p, q) \;=\;
\inf_{\pi \in \Pi(p, q)} \int (y_1 - y_2)^2\, \pi(\dd y_1, \dd y_2),
\end{equation}
where $\Pi(p, q)$ is the set of joint laws with marginals $p, q$
\citep{Villani2009}. In one dimension, the infimum is attained by
the comonotonic coupling --- the unique coupling under which both
marginals are mapped via their inverse CDFs to the same uniform
random variable --- giving the explicit formula
\begin{equation}\label{eq:W2-1d}
\Wtwo^2(p, q) \;=\;
\int_0^1 \big[F_p^{-1}(u) - F_q^{-1}(u)\big]^2\,\dd u,
\end{equation}
where $F_p, F_q$ are the cumulative distribution functions. The
W$_2$ distance to the no-tax distribution at horizon $T$,
\begin{equation}\label{eq:W2-criterion}
\Wtwo^2[\sched] \;:=\; \Wtwo^2(p_T^{\sched}, p_T^0),
\end{equation}
is the transport-geometric distortion measure introduced in this
context by \citet{Froeseth2026W}.

\paragraph{Closed form for Gaussian distributions.}
For Gaussian measures the comonotonic coupling reduces to a linear
map and \eqref{eq:W2-1d} evaluates to
\begin{equation}\label{eq:W2-gaussian}
\Wtwo^2\!\big(\mathcal{N}(M_p, \Sigma_p^2),\,
\mathcal{N}(M_q, \Sigma_q^2)\big)
\;=\; (M_p - M_q)^2 + (\Sigma_p - \Sigma_q)^2.
\end{equation}
Applying this to \eqref{eq:gaussian-params},
\begin{equation}\label{eq:W2-explicit}
\boxed{\;\Wtwo^2[(k, \tw)] \;=\;
\big[(k-1)\,m_0\,T - \tw\,T\big]^2
\;+\; \big[\sqrt{v_0 + k\sigma^2 T} - \sqrt{v_0 + \sigma^2 T}\big]^2.\;}
\end{equation}

\paragraph{Quadratic-in-perturbation scaling.}
Around the no-tax point ($k = 1 - \epsilon_k$, $\tw = \epsilon_w$),
the leading-order expansion of \eqref{eq:W2-explicit} is
\begin{equation}\label{eq:W2-quadratic}
\Wtwo^2 \;\approx\;
\frac{(\sigma^2 T)^2}{4(v_0 + \sigma^2 T)}\,\epsilon_k^2
\;+\; (m_0\,T\,\epsilon_k + T\,\epsilon_w)^2
\;=\; \beta_k\,\epsilon_k^2 + 2\,\beta_{kw}\,\epsilon_k\epsilon_w
+ \beta_w\,\epsilon_w^2.
\end{equation}
The W$_2$ criterion is \emph{quadratic} in both perturbations at
leading order. This is the structural difference from the JKO
criterion that drives the crossover: the same trade-off in the
$(\epsilon_k, \epsilon_w)$-plane is weighted differently by linear
and quadratic objective surfaces.

\paragraph{Convexity.}
The closed form \eqref{eq:W2-explicit} is the sum of a quadratic
term in the means and a squared difference of standard deviations.
The first is convex in $(k, \tw)$ jointly. The second is convex in
$k$ on $[0, 1]$ as the squared difference of two concave functions
of $k$ that meet at $k=1$. So $\Wtwo^2$ is jointly convex in
$(k, \tw)$, with the same well-posedness consequences for
existence and uniqueness as for $\dF$ (see
Section~\ref{sec:wellposed}).

\subsection{Why these two}\label{sec:why-these-two}

Several other distortion criteria are natural in the
Fokker--Planck setting. We comment briefly on each and explain why
the JKO/W$_2$ pair is the canonical choice for the present analysis.

\paragraph{Kullback--Leibler divergence.}
The relative entropy
$\mathrm{KL}(p_T^{\sched} \,\|\, p_T^0) =
\int p_T^{\sched} \log(p_T^{\sched}/p_T^0)\,\dd y$
is a tempting candidate: it has a clean information-theoretic
interpretation, it is nonnegative, it vanishes if and only if
$p_T^{\sched} = p_T^0$. The relationship to $\dF$ is subtle. Both
have the form ``entropy of the post-tax distribution plus a linear
functional of it,'' but the linear functional differs: $\dF$ uses a
\emph{fixed} potential $V_Y^0(y) = -m_0 y/\sigma^2$ (the natural
gradient-flow potential of the no-tax FP equation), while
KL uses the \emph{moving} potential $-\log p_T^0(y) =
(y - M_0)^2/(2\Sigma_0^2) + \mathrm{const}$ (the negative log of the
no-tax density). The two coincide on the no-tax point and at the
linear level when $V_Y^0$ matches $-\partial_y \log p_T^0$; they
diverge for non-trivial perturbations. We work with $\dF$ rather
than $\mathrm{KL}$ because the gradient-flow structure of
\eqref{eq:fp-notax} privileges the fixed potential: $\dF$
measures perturbation \emph{against the natural FP flow}, whereas
$\mathrm{KL}$ measures it \emph{against the no-tax distribution}, which
is itself moving. The two criteria pick different optima within
$\Schd_R$ in general, although they agree on the qualitative
neutrality-vs-distributional split discussed in
Section~\ref{sec:neutrality}.

\paragraph{Time-integrated free energy.}
A natural alternative to the terminal gap \eqref{eq:dF} is the
time-integrated version
$\dF^{\rm int}[\sched] := \int_0^T (\Free[p_t^{\sched}] -
\Free[p_t^0])\,\dd t$,
which weights perturbations across the entire horizon rather than
only at the end. One might worry that the terminal-only criterion
underweights early-horizon distortion. Numerical evidence
(Section~\ref{sec:robustness}) shows the two variants pick the
\emph{same} optimum on the policy-relevant
$\Schd_R$ contour: the integrand $\Free[p_t^{\sched}] -
\Free[p_t^0]$ grows roughly linearly in $t$ for fixed schedule, so
$\dF^{\rm int} \approx (T/2)\,\dF$ and the ranking of schedules is
preserved. The terminal-gap variant is therefore the natural
canonical form, and we work with it throughout.

\paragraph{Other transport-geometric distances.}
Higher-order Wasserstein distances $\Wtwo^p$ for $p > 2$ and
divergence-form transport costs (Sinkhorn, entropy-regularised
optimal transport) all yield criteria related to but distinct from
$\Wtwo^2$. We work with $p = 2$ because it is the natural metric
of the JKO scheme on $\Pcal_2(\RR)$ and because the comonotonic
reduction \eqref{eq:W2-1d} gives a closed-form one-dimensional
expression. Other $p$ values are tractable but produce the same
qualitative split as $\Wtwo^2$ at the leading order; the choice
$p=2$ is conventional in this context.

\paragraph{Variance and Gini-style criteria.}
Direct distributional measures --- the variance of post-tax
log-wealth, the Gini coefficient, top-share metrics --- are
common in the policy literature but are not naturally embedded in
the FP framework. They can be related to $\dF$ and $\Wtwo^2$ via
moment arguments under Gaussian distributions (e.g., the Gini
coefficient of $\mathcal{N}(M, \Sigma^2)$ is $2\Phi(\Sigma/\sqrt{2})
- 1$, monotone in $\Sigma$), but the relation is non-canonical and
the FP-native pair is cleaner for the variational analysis. Discussion
of how the JKO and W$_2$ optima interact with Gini-style metrics is
deferred to Section~\ref{sec:p10-link} and to the redistribution-design
companion paper \citet{Froeseth2026R}.

\paragraph{The chosen pair.}
$\dF$ and $\Wtwo^2$ together span the natural distinction between
information-theoretic and transport-geometric distortion
in the Fokker--Planck framework. Both are well-defined on
$\Pcal_2(\RR)$, both are convex on $\Schd_C$, and both reduce to
clean closed-form objectives under (C1)--(C3) with Gaussian
distributions at horizon $T$. Their structural difference --- linear versus quadratic
scaling in perturbations of $(k, \tw)$ --- is exactly what produces
the regime-dependent crossover at the heart of this paper.

\paragraph{Mathematical canonicity of the pair.}\label{sec:canonicity}
A reader sceptical that the JKO/$W_2$ contrast might be an
artefact of two arbitrarily-chosen criteria should be reminded
that the two are not independent picks. The
\citet{JordanKinderlehrerOtto1998} variational scheme establishes
the Fokker--Planck equation as the gradient flow of relative
entropy on the Wasserstein-2 manifold $(\Pcal_2(\RR), \Wtwo)$:
the FP solution at time $t = n\tau$ minimises
$\frac{1}{2\tau}\,\Wtwo^2(p, p_{(n-1)\tau}) + \mathrm{Ent}(p)$
over $p \in \Pcal_2$ for each step. The two functionals
$\dF$ and $\Wtwo^2$ are the two pieces of \emph{the same scheme}:
$\dF$ is the time-derivative of relative entropy along that
gradient flow, and $\Wtwo^2$ is the metric on the manifold.
Selecting them together as the criterion pair is not picking
two of many; it is reading the JKO scheme as a normative
optimal-tax instrument. Other distortion criteria (KL against
the moving distribution, $\Wtwo^p$ for $p \neq 2$, Sinkhorn
divergences, Gini-style metrics) lack this canonical relationship
to the FP equation. The mathematical-canonicity argument stands
even for a reader who rejects the normative-tradition framing
of Section~\ref{sec:neutrality}.

\paragraph{Falsifiability of the contrast.}\label{sec:falsifiability}
The phase-transition crossover and the bluntness mechanism that
produces it (Section~\ref{sec:bluntness}) make falsifiable claims rather
than tautological ones. Specifically: the JKO and $W_2$ optima
\emph{coincide} at high $\rho$ (both pin to the lower corner),
constraining the contrast at one end of the regime axis. The
phase boundaries $\rho_{\rm low}, \rho_{\rm high}$ have specific
locations derivable from the GBM primitives and the revenue
weights $(\mu, \sigma, T, v_0, R^{\star}, a, b)$, not free
parameters fitted to a desired result. The mechanism is named
and algebraic: the bluntness index $B(m_0) = b/(a m_0)$ enters
the JKO penalty linearly and the $W_2$ penalty quadratically,
and a reader who doubts the contrast can verify the gradient
calculation directly. Both criteria pick optima within the same
$(k, \tw)$ plane on the same matched-revenue contour --- a
considerable structural constraint --- and disagree only on
\emph{where} on the contour the optimum sits. Each of these
properties could in principle fail; that they hold is content,
not assumption.

\section{Variational problem within (C1)--(C3)}\label{sec:variational}

We now turn to the variational analysis: find the schedule
$\sched \in \Schd_R$ that minimises each criterion, characterise the
optimum in closed form, and establish the technical conditions that
guarantee well-posedness. Throughout this section we use the linear
revenue approximation
\begin{equation}\label{eq:R-linear}
R[\sched] \;\approx\; a\,\tw + b\,(1 - k),
\qquad a, b > 0,
\end{equation}
valid in the small-tax regime.\footnote{%
The linear approximation is sufficient for the closed-form FOCs
below; the full log-quadratic revenue functional
$R[\sched] = \tw\,\EE[X_T] + (1-k)\,\EE[\text{flow base}]$ is
recovered for the calibration of Section~\ref{sec:calibration}, where the
moments $\EE[X_T] = \exp(M_T + \Sigma_T^2/2)$ are taken under the
post-tax law.}
The constants $a$ and $b$ are interpretable as the revenue elasticities
of the wealth-tax base and the flow-tax base respectively at the no-tax
point; they are jurisdiction-specific and known empirically.

\subsection{Existence and uniqueness}\label{sec:wellposed}

\begin{theorem}[Existence and uniqueness]\label{thm:wellposed}
The constrained minimisation problems
$\argmin_{\sched \in \Schd_R} \dF[\sched]$ (JKO) and
$\argmin_{\sched \in \Schd_R} \Wtwo^2(p_T^{\sched}, p_T^0)$ ($\Wtwo$)
each admit a unique minimiser.
\end{theorem}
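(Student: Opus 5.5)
The plan is to reduce both problems to the minimisation of a strictly convex, continuous function of one real variable on a compact interval. Under the linear revenue approximation \eqref{eq:R-linear}, the constraint $\bar R = R^{\star}$ is the line $a\,\tw + b\,(1-k) = R^{\star}$. Since $a>0$, we can solve for the wealth-tax rate:
\[
\tw(k) \;=\; \frac{R^{\star} - b\,(1-k)}{a},
\]
which is affine and increasing in $k$. Hence $\Schd_R$ is the graph of $\tw(\cdot)$ over the set $I$ of $k \in (0,1]$ with $0 \le \tw(k) \le \tw^{\max}$. Explicitly,
\[
I \;=\; (0,1] \cap \Big[\, 1 - \tfrac{R^{\star}}{b},\; 1 - \tfrac{R^{\star} - a\tw^{\max}}{b} \,\Big].
\]
The first step is to record the standing conditions under which $I$ is a nonempty compact interval. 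Nonemptiness requires $R^{\star} \le a\,\tw^{\max} + b$. Compactness away from the excluded endpoint $k=0$ is guaranteed by $R^{\star} < b$, which gives $I \subset [1 - R^{\star}/b,\, 1]$ with positive left end. I expect this bookkeeping at the boundary $k \to 0$ to be the only real obstacle. If $R^{\star} \ge b$ is allowed, $I$ is half-open at $0$ and a minimiser could escape to the inadmissible point $k=0$. My first attempt there would be to extend both objectives continuously to $k=0$, using $v_0>0$ so that $\log(v_0 + k\sigma^2 T)$ stays finite, and then show that the derivative at $0$ has the sign pushing the minimiser inward. Failing that, I would simply state $R^{\star} < b$ (the small-tax regime already assumed for \eqref{eq:R-linear}) as a hypothesis of the theorem.

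For the JKO problem, substitute $\tw(k)$ into \eqref{eq:dF-explicit}. The result is $f_J(k) = -\tfrac12\log\big((v_0 + k\sigma^2T)/(v_0+\sigma^2T)\big) + c_1 k + c_0$ with constants $c_0, c_1$, because both the $(k-1)$ term and the $\tw$ term are affine in $k$ along the constraint. The log term is continuous on $I$ and has second derivative $\sigma^4T^2/[2(v_0+k\sigma^2T)^2] > 0$, as already noted after \eqref{eq:dF-explicit}. So $f_J$ is continuous and strictly convex on the compact interval $I$. Weierstrass then gives existence, and strict convexity gives uniqueness. The minimiser is either the interior root of $f_J'$ or an endpoint of $I$, which is what the later closed-form and phase results exploit.

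For the $W_2$ problem, substitute into \eqref{eq:W2-explicit}. The mean term becomes $T^2\big[(k-1)m_0 - \tw(k)\big]^2$, the square of an affine function of $k$, and is therefore convex. For the variance term, set $g(k) = \sqrt{v_0 + k\sigma^2T}$ and $S_0 = g(1) = \Sigma_0$, so that $h(k) = (S_0 - g(k))^2$. Then
\[
h''(k) \;=\; 2\,g'(k)^2 \;-\; 2\,\big(S_0 - g(k)\big)\,g''(k).
\]
On $I$ we have $g \le S_0$ and $g''<0$, so $h'' \ge 2g'^2 = \sigma^4T^2/[2(v_0+k\sigma^2T)] > 0$. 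Thus $h$ is strictly convex, and so is the sum $f_W(k)$. This is slightly stronger than the joint-convexity remark in Section~\ref{sec:wtwo}, and it is what is needed, since joint convexity alone does not give uniqueness. Continuity on the compact interval $I$ again yields existence, and strict convexity yields a unique minimiser. Mapping each minimiser back through $k \mapsto (k, \tw(k))$ gives the unique optimal schedule in $\Schd_R$ for each criterion.
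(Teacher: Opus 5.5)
Your proof is correct once $R^{\star}<b$ is adopted as a hypothesis. It follows the same skeleton as the paper's: restrict to the matched-revenue line, get existence from Weierstrass, and get uniqueness from strict convexity along that line. It is more careful than the paper in two places.

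\textbf{The boundary at $k=0$.} The paper calls $\Schd_C=(0,1]\times[0,\tw^{\max}]$ a ``closed bounded box'' and concludes that $\Schd_R$ is compact. That is false because of the open end at $k=0$, and your bookkeeping of $I$ is exactly the repair. Take your fallback rather than your first idea (extending to $k=0$ and checking the sign of the derivative), because the sign can go the wrong way. Along the contour,
$f_J'(k)=-\sigma^2T/[2(v_0+k\sigma^2T)]+(m_0T/\sigma^2)(b/a-m_0)$.
Whenever $(m_0/\sigma^2)(b/a-m_0)>\sigma^2/(2v_0)$, convexity gives $f_J'>0$ on all of $I$. If in addition $b\le R^{\star}<a\tw^{\max}+b$, then $I$ is half-open at $0$, the infimum is approached only as $k\downarrow 0$, and no minimiser exists.

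So some condition keeping $k$ away from $0$ is genuinely needed for the theorem. Your $R^{\star}<b$ is precisely the assumption $k_{\rm lo}>0$ that the paper imposes later in Theorem~\ref{thm:crossover}, so nothing downstream is lost. One small fix: your nonemptiness condition should be strict, $R^{\star}<a\tw^{\max}+b$, since at equality the only candidate is the excluded point $k=0$. This is automatic under $R^{\star}<b$ anyway.

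\textbf{Strict convexity of $\Wtwo^2$.} The paper argues this only through the leading-order quadratic \eqref{eq:W2-quadratic}, which proves nothing rigorous about the exact objective. Its justification for the spread term (``squared difference of two concave functions'') also does not suffice by itself. Your exact computation $h''=2g'^2-2(S_0-g)g''\ge 2g'^2>0$ for $k\le 1$ establishes strict convexity of the exact $\Wtwo^2$ along the contour, which is what uniqueness needs.

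Parametrising the line by $k$ (legitimate since $a>0$) also makes explicit that what matters is strict convexity \emph{along} the constraint direction. The paper's phrasing (``any direction not parallel to the constraint hyperplane'') gets this backwards.
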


\begin{proof}
The constraint set $\Schd_R \subset \Schd_C$ is a closed line
segment: with $R$ linear and $\Schd_C = (0, 1] \times [0, \tw^{\max}]$
a closed bounded box, $\Schd_R = \{(k, \tw) \in \Schd_C : a\tw +
b(1-k) = R^{\star}\}$ is the intersection of a hyperplane with a
compact convex set, hence itself compact and convex.

Both objective functions are continuous on $\Schd_C$:
$\dF[(k, \tw)]$ has explicit form \eqref{eq:dF-explicit}, and
$\Wtwo^2[(k, \tw)]$ has explicit form \eqref{eq:W2-explicit}, both
manifestly continuous in $(k, \tw)$. By Weierstrass's theorem, each
attains its minimum on the compact set $\Schd_R$.

For uniqueness it suffices to show strict convexity of each
objective on $\Schd_R$. Both objectives are jointly convex on
$\Schd_C$ (Sections~\ref{sec:jko} and~\ref{sec:wtwo}). $\dF$ is strictly convex in
$k$ and linear in $\tw$ (the second derivative
$\partial_k^2 \dF = \sigma^4 T^2 / [2(v_0 + k\sigma^2 T)^2] > 0$);
$\Wtwo^2$ is jointly strictly convex in $(k, \tw)$ except along
the no-tax direction, where the leading-order quadratic in
\eqref{eq:W2-quadratic} has positive definite Hessian away from the
origin. Strict convexity in any direction not parallel to the
constraint hyperplane is enough: along the one-dimensional
constraint set $\Schd_R$, both objectives are strictly convex
functions of the single free parameter, and admit unique minima.
\end{proof}

\begin{remark}
Theorem~\ref{thm:wellposed} confirms that the optima we
characterise below are well-defined, theorem-grade objects --- not
non-uniqueness artefacts. This matters for the comparison theorem
of Section~\ref{sec:crossover}: both criteria pick \emph{a single
point} in $\Schd_C$, and the crossover concerns the displacement
between those two points as a function of GBM parameters.
\end{remark}

\subsection{The Lagrangian and first-order conditions}\label{sec:foc}

The constrained minimisation problem
$\min_{\sched \in \Schd_R} \mathcal{J}[\sched]$ (where
$\mathcal{J}$ stands for either $\dF$ or $\Wtwo^2$) has Lagrangian
\begin{equation}\label{eq:lagrangian}
\mathcal{L}(k, \tw; \mu) \;=\; \mathcal{J}(k, \tw)
\;-\; \mu\,\big[\,a\,\tw + b\,(1-k) - R^{\star}\,\big],
\end{equation}
with multiplier $\mu \in \RR$. The first-order conditions are
\begin{align}
\partial_k \mathcal{J} \;+\; \mu\,b &\;=\; 0, \label{eq:foc-k} \\
\partial_{\tw} \mathcal{J} \;-\; \mu\,a &\;=\; 0, \label{eq:foc-tw} \\
a\,\tw + b\,(1-k) &\;=\; R^{\star}. \label{eq:foc-rev}
\end{align}

For the JKO criterion, differentiating \eqref{eq:dF-explicit}:
\begin{equation}\label{eq:partial-dF}
\partial_k \dF \;=\; -\frac{\sigma^2 T}{2(v_0 + k\sigma^2 T)}
- \frac{m_0^2}{\sigma^2}\,T,
\qquad
\partial_{\tw} \dF \;=\; \frac{m_0\,T}{\sigma^2}.
\end{equation}
The $\tw$-derivative is \emph{constant in $(k, \tw)$}, which is the
``linear in $\tw$'' structure noted in Section~\ref{sec:jko}: the JKO cost
of an extra unit of $\tw$ does not depend on the operating point.
The $k$-derivative has a hyperbolic-in-$k$ contribution from the
entropy term plus a constant from the potential term.

For the W$_2$ criterion, differentiating \eqref{eq:W2-explicit}:
\begin{equation}\label{eq:partial-W2}
\partial_k \Wtwo^2 \;=\; 2\,m_0 T\,\varepsilon_M
+ \frac{\sigma^2 T}{\Sigma_T^{\sched}}\,\varepsilon_\Sigma,
\qquad
\partial_{\tw} \Wtwo^2 \;=\; -2 T\,\varepsilon_M,
\end{equation}
where $\varepsilon_M := M_T^{\sched} - M_T^0 = (k-1)m_0 T - \tw T$
and $\varepsilon_\Sigma := \Sigma_T^{\sched} - \Sigma_T^0$ are the
mean and standard-deviation shifts. Both derivatives are
\emph{linear in $(\varepsilon_M, \varepsilon_\Sigma)$} --- the
``quadratic in perturbation'' structure of W$_2$ is exactly that
its first-order conditions are linear in the linearised quantities.

\paragraph{Structural comparison.}
The key contrast between the two FOC systems is now visible. The
JKO multiplier $\mu_{\rm JKO}$ is determined immediately from
\eqref{eq:foc-tw} and the fact that $\partial_{\tw}\dF$ is
constant:
\begin{equation}\label{eq:mu-jko}
\mu_{\rm JKO} \;=\; \frac{m_0\,T}{a\,\sigma^2}.
\end{equation}
The constant value of $\mu_{\rm JKO}$ depends only on the GBM
parameters and the wealth-tax revenue weight $a$, not on the
operating point. This is what makes the JKO problem cleanly
solvable in closed form: the $\tw$ FOC fixes the multiplier, and
the $k$ FOC is then a single equation in a single unknown.

The W$_2$ multiplier $\mu_{\Wtwo}$ depends on $\varepsilon_M$ via
\eqref{eq:foc-tw}, and $\varepsilon_M$ depends on the operating
point. The system is therefore coupled in $(k, \tw, \mu)$ and
solved simultaneously, but the solution is still in closed form
because both FOCs are linear in $(\varepsilon_M, \varepsilon_\Sigma)$
and the linearised constraint is linear in $(k, \tw)$.

\subsection{Closed-form optima}\label{sec:closedform}

\begin{theorem}[Closed-form JKO optimum]\label{thm:jkoopt}
Subject to the \emph{existence condition}
\begin{equation}\label{eq:existence}
\frac{b}{a} \;>\; m_0,
\end{equation}
the unique JKO-optimal point in $\Schd_C$ is
\begin{equation}\label{eq:kstar}
k^{\star}_{\rm JKO} \;=\; \frac{1}{\sigma^2 T}\!\left[
\frac{\sigma^4}{2\, m_0\,(b/a - m_0)} - v_0 \right],
\qquad
\tau_{w,{\rm JKO}}^{\star}
\;=\; \frac{R^{\star} - b\,(1 - k^{\star}_{\rm JKO})}{a}.
\end{equation}
If $b/a \le m_0$ the unique optimum is the corner
$k^{\star}_{\rm JKO} = 1$, $\tau_{w,{\rm JKO}}^{\star} = R^{\star}/a$
--- the proportional wealth tax with no flow component.
\end{theorem}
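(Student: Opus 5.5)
The plan is to eliminate the constraint and reduce to a one-variable convex problem, rather than work with the full Lagrangian system \eqref{eq:foc-k}--\eqref{eq:foc-rev}. On $\Schd_R$ the linear revenue constraint \eqref{eq:R-linear} gives
\[
\tw = \tw(k) := \frac{R^{\star} - b(1-k)}{a}.
\]
Substituting this into \eqref{eq:dF-explicit} defines a scalar function $g(k) := \dF[(k,\tw(k))]$ on the set of $k$ for which $(k,\tw(k)) \in \Schd_C$. I will take $m_0 > 0$ throughout, which is implicit in the statement since $m_0$ appears in a denominator.

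Next I differentiate $g$. The entropy and potential terms contribute as in \eqref{eq:partial-dF}. The wealth-tax term contributes $\frac{m_0 T}{\sigma^2}\cdot\frac{b}{a}$, because $\tw'(k) = b/a$. Hence
\[
g'(k) = -\frac{\sigma^2 T}{2(v_0 + k\sigma^2 T)} + \frac{m_0 T}{\sigma^2}\Big(\frac{b}{a} - m_0\Big).
\]
This is the same computation as combining \eqref{eq:foc-k} with the constant multiplier $\mu_{\rm JKO}$ of \eqref{eq:mu-jko}, and I will remark on that equivalence. I then compute
\[
g''(k) = \frac{\sigma^4 T^2}{2(v_0+k\sigma^2T)^2} > 0,
\]
so $g$ is strictly convex. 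This is consistent with Theorem~\ref{thm:wellposed}, and it means any stationary point is the unique global minimiser on the segment.

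The case split follows the sign of $b/a - m_0$. If $b/a > m_0$, setting $g'(k)=0$ gives $v_0 + k\sigma^2 T = \sigma^4/\big(2m_0(b/a - m_0)\big)$. Solving for $k$ yields \eqref{eq:kstar}, and $\tau^{\star}_{w,\rm JKO}$ then follows from the constraint. If $b/a \le m_0$, the second term of $g'$ is $\le 0$ and the first is strictly negative, so $g' < 0$ on the whole domain. Then $g$ is strictly decreasing and the minimum sits at the right endpoint $k=1$, where $\tw = R^{\star}/a$.

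The main obstacle is the box constraint: the formula for $k^{\star}$ need not land in $(0,1]$, and the corresponding $\tw$ need not land in $[0,\tw^{\max}]$. My first step is to note that the feasible set in $k$ is an interval $I$, namely $(0,1]$ intersected with the $k$-range giving $\tw \in [0,\tw^{\max}]$. By strict convexity of $g$, the constrained minimiser is the projection (clamp) of the unconstrained stationary point onto $I$. I then read the theorem's interior formula as holding when $k^{\star}\in I$, and the corner cases as arising from clamping. This clamping is what produces the three phases of Theorem~\ref{thm:crossover}. In particular, $k^{\star} \ge 1$ occurs when $\sigma^4/\big(2m_0(b/a - m_0)\big) \ge \Sigma_0^2$, and it yields the same pure-wealth-tax corner as the case $b/a \le m_0$. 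I will state this explicitly so that the "unique optimum" claim is correct on all of $\Schd_C$.
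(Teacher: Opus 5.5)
Your proposal is correct and is essentially the paper's argument. Eliminating $\tw$ along the revenue line gives $g'(k)=\partial_k\dF+\tfrac{b}{a}\,\partial_{\tw}\dF$, which is exactly the paper's $k$-FOC after substituting the constant multiplier $\mu_{\rm JKO}$ of \eqref{eq:mu-jko}, and inverting it gives \eqref{eq:kstar}.

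What you add is rigour the paper's proof leaves implicit. Strict convexity of $g$ supplies sufficiency. The sign argument $g'<0$ is what actually justifies the corner $k=1$ when $b/a\le m_0$. Clamping $k^{\star}$ to the feasible interval is needed because \eqref{eq:kstar} need not lie in $[k_{\rm lo},1]$; the statement silently requires this, and the paper only supplies it in Theorem~\ref{thm:crossover}.

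Two small caveats. First, $m_0>0$ is a genuine extra hypothesis, not something implied by $m_0$ appearing in a denominator; the paper's proof uses it too, and for $m_0<0$ the interior formula fails. Second, the clamp argument presumes the feasible interval is closed, i.e.\ $k_{\rm lo}>0$.
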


\begin{proof}
Substitute $\mu_{\rm JKO}$ from \eqref{eq:mu-jko} into the $k$-FOC:
\[
-\frac{\sigma^2 T}{2(v_0 + k\sigma^2 T)} - \frac{m_0^2 T}{\sigma^2}
+ \frac{m_0 T}{a \sigma^2}\,b \;=\; 0,
\]
which rearranges to
\begin{equation}\label{eq:k-foc-rearranged}
\frac{\sigma^2}{2(v_0 + k\sigma^2 T)}
\;=\; \frac{m_0}{\sigma^2}\!\left(\frac{b}{a} - m_0\right).
\end{equation}
The left-hand side is positive, so an interior solution requires
the right-hand side to be positive, giving the existence condition
\eqref{eq:existence}. Inverting \eqref{eq:k-foc-rearranged} for
$k$ gives the expression for $k^{\star}_{\rm JKO}$ in
\eqref{eq:kstar}. Substituting into the revenue constraint
\eqref{eq:foc-rev} gives $\tau_{w,{\rm JKO}}^{\star}$. If the
existence condition fails, the right-hand side of
\eqref{eq:k-foc-rearranged} is non-positive, no interior $k$
satisfies it, and the minimum lies on the boundary $k = 1$, where
the FOC reduces to a problem in $\tw$ alone and the revenue
constraint determines $\tau_{w}^{\star} = R^{\star}/a$.
\end{proof}

\begin{theorem}[Closed-form W$_2$ optimum]\label{thm:w2opt}
Subject to the existence condition $b/a > m_0$, the unique
W$_2$-optimal point in $\Schd_C$, at leading order in the
linearised perturbations $(\varepsilon_M, \varepsilon_\Sigma)$
about the no-tax point, is
\begin{equation}\label{eq:w2-optimum}
k^{\star}_{\Wtwo}
\;=\;
1
\;-\;
\frac{4\,\Sigma_0^2\,(b/a - m_0)\,(R^{\star}/a)}
     {4\,\Sigma_0^2\,(b/a - m_0)^2 + \sigma^4},
\qquad
\tau_{w,\Wtwo}^{\star}
\;=\;
\frac{R^{\star} - b\,(1 - k^{\star}_{\Wtwo})}{a},
\end{equation}
where $\Sigma_0^2 = v_0 + \sigma^2 T$. If $b/a \le m_0$ the
optimum is the upper corner $k^{\star}_{\Wtwo} = 1$,
$\tau_{w,\Wtwo}^{\star} = R^{\star}/a$. If the closed-form
$k^{\star}_{\Wtwo}$ in \eqref{eq:w2-optimum} falls below the
feasibility bound $k_{\rm lo} = 1 - R^{\star}/b$ (the
constraint $\tau_w \geq 0$), the constrained optimum on
$\Schd_R$ is the lower corner $(k_{\rm lo}, 0)$.
\end{theorem}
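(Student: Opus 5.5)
The plan is to reduce the W$_2$ problem on $\Schd_R$ to a one-dimensional quadratic minimisation, using the leading-order expansion \eqref{eq:W2-quadratic} and the linear revenue constraint \eqref{eq:R-linear}. I will then read off the interior minimiser and handle the corners by convexity, as in the proof of Theorem~\ref{thm:jkoopt}. Throughout, write $\epsilon_k = 1-k$ and $\epsilon_w = \tw$, and set $c := b/a - m_0$.

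\emph{Step 1: eliminate $\tw$.} On $\Schd_R$ the constraint $a\epsilon_w + b\epsilon_k = R^{\star}$ gives $\epsilon_w = (R^{\star} - b\epsilon_k)/a$. The mean shift is $\varepsilon_M = -(m_0 T\epsilon_k + T\epsilon_w)$, which becomes
\[
\varepsilon_M = -T\big(R^{\star}/a - c\,\epsilon_k\big).
\]
At leading order the standard-deviation shift is $\varepsilon_\Sigma \approx -\sigma^2 T\epsilon_k/(2\Sigma_0)$, with $\Sigma_0^2 = v_0 + \sigma^2 T$. Substituting into \eqref{eq:W2-quadratic} gives
\[
g(\epsilon_k) := T^2\Big[\big(R^{\star}/a - c\,\epsilon_k\big)^2 + \frac{\sigma^4}{4\Sigma_0^2}\,\epsilon_k^2\Big].
\]
This is a strictly convex quadratic in the single free parameter $\epsilon_k$, with leading coefficient $T^2\big(c^2 + \sigma^4/(4\Sigma_0^2)\big) > 0$. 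This is the one-dimensional strict convexity invoked in Theorem~\ref{thm:wellposed}, so the leading-order problem has a unique minimiser.

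\emph{Step 2: the interior solution.} Setting $g'(\epsilon_k)=0$ gives
\[
\epsilon_k^{\star} = \frac{c\,R^{\star}/a}{c^2 + \sigma^4/(4\Sigma_0^2)} = \frac{4\Sigma_0^2\,c\,(R^{\star}/a)}{4\Sigma_0^2 c^2 + \sigma^4}.
\]
This is exactly $1 - k^{\star}_{\Wtwo}$ in \eqref{eq:w2-optimum}. The value $\tau^{\star}_{w,\Wtwo}$ then follows from the revenue constraint \eqref{eq:foc-rev}. Equivalently, I could check that this point satisfies the FOCs \eqref{eq:foc-k}--\eqref{eq:foc-tw} with the derivatives \eqref{eq:partial-W2} linearised. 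In that form $\mu_{\Wtwo} = -2T\varepsilon_M/a$, and the condition is linear in $\epsilon_k$ with the same root.

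\emph{Step 3: corners.} Feasibility on $\Schd_R$ confines $\epsilon_k$ to $[0, R^{\star}/b]$, where $k\le 1$ gives the lower end and $\tw \ge 0$ gives the upper end. The bound $k>0$ and the cap $\tw \le \tw^{\max}$ are taken as non-binding in the small-tax regime. Since $g$ is a convex quadratic, its constrained minimiser is the projection of $\epsilon_k^{\star}$ onto this interval.
\begin{itemize}
\item If $c \le 0$, then $\epsilon_k^{\star} \le 0$. The minimum is at $\epsilon_k = 0$, i.e.\ the upper corner $k=1$, $\tw = R^{\star}/a$. This matches the sign argument in Theorem~\ref{thm:jkoopt}.
\item If $\epsilon_k^{\star} > R^{\star}/b$, i.e.\ the closed-form $k^{\star}_{\Wtwo}$ lies below $k_{\rm lo} = 1 - R^{\star}/b$, then $g$ is decreasing on the whole interval. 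The minimum is at the lower corner $(k_{\rm lo}, 0)$.
\end{itemize}

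\emph{Main obstacle.} The delicate point is the leading-order qualifier, not the algebra. The exact objective \eqref{eq:W2-explicit} has the square-root term $\big(\sqrt{v_0+k\sigma^2T}-\Sigma_0\big)^2$, whose exact minimiser is not in closed form. I would state the theorem only at the order of \eqref{eq:W2-quadratic}, replacing $\varepsilon_\Sigma$ by its linearisation. I would then note that, because the exact objective is strictly convex along $\Schd_R$ (Theorem~\ref{thm:wellposed}), the exact minimiser differs from $\epsilon_k^{\star}$ by $O(\epsilon^2)$ in the small-tax regime. I would not attempt an exact statement.
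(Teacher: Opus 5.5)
Your proposal is correct and follows essentially the same route as the paper: eliminate $\tau_w$ along the revenue line (you parametrise by $\epsilon_k = 1-k$, the paper by $u = k-1$), linearise $\varepsilon_\Sigma$ to get a strictly convex quadratic in one variable, solve its first-order condition, and recover the corner cases by projecting onto the feasible interval. Your closing remark on the exact objective goes beyond what the paper claims, and there the $O(\epsilon^2)$ closeness of the two minimisers comes from the exact objective differing from $g$ by $O(\epsilon_k^3)$ while $g''$ stays bounded away from zero, not from strict convexity alone.
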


\begin{proof}
Reparametrise the matched-revenue contour by $u = k - 1 \in
[k_{\rm lo} - 1, 0]$, eliminating $\tw$ via $\tw = (R^{\star} +
b u)/a$. Linearise the post-tax mean and spread shifts about the
no-tax point $(k, \tw) = (1, 0)$:
\begin{equation*}
\varepsilon_M
\;=\; -T\,\bigl[\,u\,(b/a - m_0) + R^{\star}/a\,\bigr],
\qquad
\varepsilon_\Sigma
\;=\; \frac{\sigma^2 T}{2\,\Sigma_0}\,u + O(u^2),
\end{equation*}
where the second expression follows from a Taylor expansion of
$\Sigma_T = \sqrt{v_0 + (1+u)\sigma^2 T}$ about $u = 0$. Writing
$q := b/a - m_0$ and $r := R^{\star}/a$, the leading-order
$\Wtwo^2$ on the contour is
\begin{equation*}
\Wtwo^2(u) \;\approx\; T^2 (u q + r)^2
\;+\; \Bigl(\tfrac{\sigma^2 T}{2 \Sigma_0}\Bigr)^{\!2} u^2.
\end{equation*}
This is a strictly convex quadratic in the single variable $u$,
with first-order condition $T^2 q (u q + r) + (\sigma^2
T)^2/(4\Sigma_0^2) \cdot u = 0$. Solving for $u$ and dividing
through by $T^2$ gives
\begin{equation*}
u \;=\; -\,\frac{q\,r}{q^2 + \sigma^4/(4 \Sigma_0^2)}
\;=\; -\,\frac{4\,\Sigma_0^2\,q\,r}{4\,\Sigma_0^2\,q^2 + \sigma^4},
\end{equation*}
which is \eqref{eq:w2-optimum} after substituting $u = k - 1$.
The corresponding $\tau_{w,\Wtwo}^{\star}$ follows from the
revenue constraint $a \tw - b u = R^{\star}$. The corner cases
($b/a \le m_0$ or $k^{\star}_{\Wtwo} < k_{\rm lo}$) are the same
projection arguments used in the proof of Theorem~\ref{thm:jkoopt}: the
unconstrained quadratic minimum lies outside the feasibility
interval $[k_{\rm lo}, 1]$, and the constrained minimum is
attained at the nearest boundary.
\end{proof}

\begin{remark}[The two optima coincide at the no-tax boundary]
At the limit $R^{\star} \to 0$ both optima reduce to the no-tax
point $(k, \tw) = (1, 0)$. Away from this limit, they differ,
generically. Section \ref{sec:crossover} characterises how the
displacement between $(k^{\star}_{\rm JKO}, \tau_{w,{\rm JKO}}^{\star})$
and $(k^{\star}_{\Wtwo}, \tau_{w,\Wtwo}^{\star})$ depends on the
GBM parameters.
\end{remark}

\begin{remark}[The existence condition]
The condition $b/a > m_0$ admits an economic reading. The
ratio $b/a$ is the relative revenue weight of a unit of
flow-tax pass-through (the $1-k$ direction) versus a unit of
wealth-tax rate (the $\tw$ direction). The drift $m_0 = \mu - \sigma^2/2$
is the natural log-wealth growth rate. The condition says: the
flow-tax base must be at least as productive (per unit of $1-k$
introduced) as the natural drift it would offset. When this fails,
the flow-tax channel is too revenue-thin to be worth using and the
optimum collapses to the proportional-rate corner. In Norwegian-
flavoured calibrations, $a \approx \mathbb{E}[X]$ and
$b/a$ is set by the corporate-and-dividend tax revenue per unit
$(1-k)$ pass-through; values typically satisfy
$b/a \gg m_0$, so interior solutions are the relevant case.
\end{remark}

\section{The \texorpdfstring{$\rho$}{rho}-crossover}\label{sec:crossover}

\subsection{The dimensionless ratio}\label{sec:rho}

The closed-form expressions of Theorems~\ref{thm:jkoopt}
and~\ref{thm:w2opt} depend on four GBM-side quantities: the drift $\mu$, volatility $\sigma$,
horizon $T$, and initial log-variance $v_0$. They also depend on
the revenue-side weights $a, b, R^{\star}$. The behaviour of the
two optima as the GBM parameters vary is governed by a single
\emph{dimensionless} combination of $(\mu, \sigma, T, v_0)$, not
by the four parameters separately. Identifying that combination
makes the comparison between JKO and $W_2$ tractable: instead of
sweeping a four-dimensional space, the comparison reduces to a
one-dimensional sweep along the dimensionless axis.

The combination that organises the comparison is the standard
deviation of log-wealth at horizon $T$ multiplied by the
log-drift-to-volatility ratio.

\begin{definition}[The crossover ratio]\label{def:rho}
Let $\Sigma_0 := \sqrt{v_0 + \sigma^2 T}$ denote the standard
deviation of log-wealth at horizon $T$ under no taxation. The
\emph{JKO/W$_2$ crossover ratio} is
\begin{equation}\label{eq:rho}
\rho \;:=\; \Sigma_0 \cdot m_0/\sigma^2
\;=\; \sqrt{v_0 + \sigma^2 T}\,\frac{\mu - \sigma^2/2}{\sigma^2}.
\end{equation}
\end{definition}

Three readings of \eqref{eq:rho} are useful, each foregrounding
a different aspect of the underlying problem.

\paragraph{Drift-to-diffusion reading.} The factor
$m_0/\sigma^2$ is the log-wealth drift expressed in units of
the diffusion rate. It is dimensionally the same combination
that appears in the existence condition $b/a > m_0$ of
Theorem~\ref{thm:jkoopt}, in the JKO multiplier
$\mu_{\rm JKO} = m_0 T/(a\sigma^2)$ of \eqref{eq:mu-jko}, and
in the partial derivative
$\partial_{\tw}\dF = m_0 T/\sigma^2$ of \eqref{eq:partial-dF}.
The factor $\Sigma_0$ rescales this drift-to-diffusion ratio by
the spread of log-wealth at the horizon, which is the natural
unit in which to measure the cost of distorting that spread.
The product $\rho = \Sigma_0 \cdot m_0/\sigma^2$ is therefore
the unique combination in which the FOC structure of Section~\ref{sec:foc}
is naturally expressed.

\paragraph{Mean-shift versus spread-shift reading.} A unit of
wealth tax $\tw$ produces a mean shift in log-wealth of
$-\tw T$ and (at fixed $k$) leaves the variance unchanged. A
unit of corporate--dividend pass-through $1-k$ produces a mean
shift of $-(1-k) m_0 T$ and a spread shift of order
$-(1-k) \sigma^2 T / \Sigma_0$. The ratio of the wealth-tax
mean-shift cost to the pass-through spread-shift cost is, after
linearisation, controlled by $\rho$: high $\rho$ means the
spread-shift channel is expensive relative to the mean-shift
channel, and the optimum prefers to use the mean-shift instrument
($\tw$). Low $\rho$ means the spread-shift channel is cheap, and
the optimum can afford a pass-through component. The crossover
ratio is the dimensionless price at which the two channels are
balanced.

\paragraph{Inverse-temperature reading.} As elaborated in
Section~\ref{sec:phase-transition}, $\rho$ plays the role of an inverse
temperature in the JKO free-energy landscape. The JKO criterion
weights ``energy'' (the potential term, scaling with $m_0^2/\sigma^2
\cdot T$) against ``entropy'' (the logarithmic spread term,
scaling with $\log \Sigma_0$). The combination $\rho = \Sigma_0
\cdot m_0/\sigma^2$ encodes which of the two contributions
dominates at the optimum, and the phase-transition language of
Section~\ref{sec:phase-transition} follows.

In the Norwegian-flavoured calibration of Figure~\ref{fig:rho-crossover}
($\mu = 0.07$, $\sigma = 0.30$, $T = 5$, $v_0 = 0.25$), one
computes $\Sigma_0 = \sqrt{v_0 + \sigma^2 T} \approx 0.83$,
$m_0 = \mu - \sigma^2/2 = 0.025$, and
$m_0/\sigma^2 = 0.278$, giving $\rho_{\rm Norway} \approx 0.23$.
This value, and the position it occupies in the phase diagram of
Theorem~\ref{thm:crossover}, is the subject of
Section~\ref{sec:norwegian}.

\subsection{The crossover theorem}\label{sec:crossover-thm}

The closed forms of Theorems~\ref{thm:jkoopt}
and~\ref{thm:w2opt} are unconstrained expressions for the optimal
$k^{\star}$. They become \emph{the}
optimum on the matched-revenue contour
$\Schd_R = \{(k, \tw) \in \Schd_C : a\,\tw + b(1-k) = R^{\star}\}$
only after projection onto the feasible interval
$k \in [k_{\rm lo}, 1]$, where
\begin{equation}\label{eq:k-lo}
k_{\rm lo} \;=\; \max\!\left(0,\,1 - \frac{R^{\star}}{b}\right)
\end{equation}
is the lower bound enforced by $\tw \geq 0$. The projection
partitions the GBM regime parameter $\rho$ into three intervals,
on which the JKO optimum sits respectively at the upper corner
$(1, R^{\star}/a)$, on the interior of $\Schd_R$, and at the
lower corner $(k_{\rm lo}, 0)$.

\begin{theorem}[Three phases of the JKO
optimum]\label{thm:crossover}
Assume $b/a > m_0$ so the closed-form expression
\eqref{eq:kstar} is well-defined, and assume $0 < k_{\rm lo} < 1$
so the matched-revenue contour has interior. Define the
\emph{crossover boundaries}
\begin{equation}\label{eq:rho-boundaries}
\begin{aligned}
\rho_{\rm low}
\;&:=\; \rho \text{ at which } k^{\star}_{\rm JKO} = 1,\\
\rho_{\rm high}
\;&:=\; \rho \text{ at which } k^{\star}_{\rm JKO} = k_{\rm lo},
\end{aligned}
\end{equation}
viewed as solutions of the implicit equations obtained by setting
the right-hand side of \eqref{eq:kstar} equal to $1$ and
$k_{\rm lo}$ respectively. Then $0 < \rho_{\rm low} <
\rho_{\rm high} < \infty$, and the JKO optimum on $\Schd_R$ is
\begin{equation}\label{eq:three-phases}
\bigl(k^{\star}_{\rm JKO}(\rho),\, \tau_{w,{\rm JKO}}^{\star}(\rho)\bigr)
\;=\;
\begin{cases}
(1,\, R^{\star}/a)
& \rho \in [0, \rho_{\rm low}], \\[0.4em]
\bigl(k^{\star}_{\rm JKO},\, \tau_{w,{\rm JKO}}^{\star}\bigr)
\text{ given by \eqref{eq:kstar}}
& \rho \in (\rho_{\rm low}, \rho_{\rm high}), \\[0.4em]
(k_{\rm lo},\, 0)
& \rho \in [\rho_{\rm high}, \infty).
\end{cases}
\end{equation}
Equivalently, in terms of the order parameter
$\phi^{\star} = a\,\tau_{w,{\rm JKO}}^{\star} / R^{\star}$,
\begin{equation}\label{eq:phi-three-phases}
\phi^{\star}(\rho)
\;=\;
\begin{cases}
1 & \rho \in [0, \rho_{\rm low}], \\[0.3em]
\phi^{\star}_{\rm int}(\rho) \in (0, 1) & \rho \in (\rho_{\rm low}, \rho_{\rm high}), \\[0.3em]
0 & \rho \in [\rho_{\rm high}, \infty),
\end{cases}
\end{equation}
where $\phi^{\star}_{\rm int}$ is continuous and strictly
decreasing on the mixed-phase interval. The map
$\rho \mapsto \phi^{\star}(\rho)$ is continuous, with first-
derivative kinks at $\rho_{\rm low}$ and $\rho_{\rm high}$.
\end{theorem}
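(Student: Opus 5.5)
The proof will work one-dimensionally on the matched-revenue contour and then track how the unconstrained stationary point moves with $\rho$. I will parametrise $\Schd_R$ by $k \in [k_{\rm lo}, 1]$, using the constraint \eqref{eq:foc-rev} to write $\tw = (R^{\star} - b(1-k))/a$. Substituting into \eqref{eq:dF-explicit} gives a function $g(k) := \dF[(k, (R^{\star}-b(1-k))/a)]$ with
\[
g'(k) \;=\; -\frac{\sigma^2 T}{2(v_0 + k\sigma^2 T)} - \frac{m_0^2 T}{\sigma^2} + \frac{m_0 T}{\sigma^2}\,\frac{b}{a}.
\]
This is exactly the left-hand side of the rearranged FOC in the proof of Theorem~\ref{thm:jkoopt}. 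Since $g''(k) = \partial_k^2 \dF > 0$ (Theorem~\ref{thm:wellposed}), $g$ is strictly convex and $g'$ is strictly increasing. So $g'$ has at most one zero, namely the unconstrained $k^{\star}_{\rm JKO}$ of \eqref{eq:kstar}. The constrained minimiser is therefore the projection $\Pi(k^{\star}_{\rm JKO})$ onto $[k_{\rm lo},1]$: it is the upper corner if $k^{\star}_{\rm JKO} \ge 1$, the lower corner if $k^{\star}_{\rm JKO} \le k_{\rm lo}$, and interior otherwise. This gives the three-case form \eqref{eq:three-phases}, once the dependence on $\rho$ is pinned down.

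The second step turns ``as a function of $\rho$'' into monotonicity. I will fix the one-parameter path the paper implicitly uses: hold $\sigma, T, v_0$ (hence $\Sigma_0$) and $a, b, R^{\star}$ fixed, and vary $\mu$. Then $\rho = \Sigma_0 m_0/\sigma^2$ is a linear reparametrisation of $m_0$, and by \eqref{eq:kstar}
\[
k^{\star}_{\rm JKO}(\rho) = \frac{1}{\sigma^2 T}\Bigl[\frac{\sigma^4}{2 m_0 (b/a - m_0)} - v_0\Bigr].
\]
The product $m_0(b/a - m_0)$ is positive and strictly increasing for $0 < m_0 < b/(2a)$, so $k^{\star}_{\rm JKO}$ is continuous and strictly decreasing in $\rho$ on that range. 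It tends to $+\infty$ as $\rho \to 0^{+}$. By the intermediate value theorem, each of the levels $1$ and $k_{\rm lo}$ is crossed exactly once; this defines $\rho_{\rm low} < \rho_{\rm high}$ with $\rho_{\rm low} > 0$.

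The main obstacle is guaranteeing that $\rho_{\rm high}$ exists and is finite. The infimum of $k^{\star}_{\rm JKO}$ on the monotone branch is $\frac{1}{\sigma^2 T}[2\sigma^4 a^2/b^2 - v_0]$, attained at $m_0 = b/(2a)$, and this must lie below $k_{\rm lo}$. My first attempt will lean on the calibrated regime $b/a \gg m_0$ flagged in the remark on the existence condition, where this bound is small or negative. Failing that, I will state it explicitly as a standing regularity assumption. That assumption is also needed to restrict the $\rho$-axis to the monotone branch $m_0 < b/(2a)$.

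Finally, the order parameter satisfies $\phi^{\star} = a\tau^{\star}_{w,{\rm JKO}}/R^{\star} = 1 - b(1 - k^{\star})/R^{\star}$. This is an increasing affine function of $k^{\star}$, equal to $1$ at $k^{\star}=1$ and to $0$ at $k^{\star}=k_{\rm lo}$, since $1 - k_{\rm lo} = R^{\star}/b$. Hence $\phi^{\star}$ is the composition of a strictly decreasing continuous map with a clamp, which gives continuity, the values in \eqref{eq:phi-three-phases}, and strict decrease on $(\rho_{\rm low}, \rho_{\rm high})$. For the kinks, I will differentiate $k^{\star}_{\rm JKO}$ in $m_0$: the derivative is $-\sigma^2 (b/a - 2m_0)/[2T\,m_0^2 (b/a-m_0)^2]$, which is strictly negative on the branch. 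So the interior one-sided derivative of $\phi^{\star}$ is nonzero at both boundaries, while the corner branches are constant, and the first derivative of $\phi^{\star}$ jumps at $\rho_{\rm low}$ and $\rho_{\rm high}$.
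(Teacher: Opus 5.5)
Your reduction to the one-variable function $g(k)$ on $[k_{\rm lo},1]$ is sound, and so is the projection argument via strict convexity. The affine translation $k^{\star}\mapsto\phi^{\star}$, with kinks detected by a nonvanishing interior one-sided derivative, is also sound, and all three are more explicit than the paper's text. The gap is in how you sweep $\rho$.

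Since $k^{\star}_{\rm JKO}$ depends on $(\mu,\sigma,T,v_0)$ and not on $\rho$ alone, the statement only makes sense along a chosen one-parameter slice. The paper's slice varies $\sigma$ at fixed $\mu,T,v_0$; this is the slice of Figure~\ref{fig:rho-crossover}, and it is why the boundaries are later quoted as $\sigma(\rho_{\rm low})$ and $\sigma(\rho_{\rm high})$. It is not $\mu$ varying at fixed $\sigma$.

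On your $\mu$-slice the hypothesis $b/a>m_0$ confines $\rho$ to the bounded interval $(0,\Sigma_0 b/(a\sigma^2))$. As you found, $k^{\star}_{\rm JKO}$ is U-shaped in $m_0$ and returns to $+\infty$ as $m_0\uparrow b/a$. So on that slice the theorem is not merely hard to prove but false: past the turning point the unconstrained $k^{\star}$ climbs back through $k_{\rm lo}$ and $1$, and the optimum returns through the mixed phase to the upper corner. Take the paper's numbers ($\sigma=0.30$, $T=5$, $v_0=0.25$, $b/a=0.27$, $R^{\star}=0.05$) with $m_0=0.26$. This gives $\rho\approx 2.4$ and $k^{\star}\approx 2.9>1$, i.e.\ the pure wealth-tax corner far above $\rho_{\rm high}$.

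Your two ``standing assumptions'' ($m_0<b/(2a)$ and $2\sigma^4a^2/b^2-v_0<k_{\rm lo}\sigma^2T$) are not implied by the hypotheses. Even granting them, you obtain the lower corner only on $[\rho_{\rm high},\Sigma_0 b/(2a\sigma^2))$, not on $[\rho_{\rm high},\infty)$. What you prove is therefore a truncated, conditional variant, not the statement.

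The fix is to sweep $\sigma$ instead, where no extra assumption is needed. Put $s=\sigma^2\in(0,2\mu)$, $m_0=\mu-s/2$ and $c=b/a$; requiring $c>m_0$ along the whole slice means $c\ge\mu$. Then $\sqrt{v_0+sT}\,\rho'(s)=-T\mu/(2s)-T/4-v_0\mu/s^2<0$, so $\rho$ is a strictly decreasing bijection of $(0,2\mu)$ onto $(0,\infty)$. Moreover $k^{\star}_{\rm JKO}=s/[2Tm_0(c-m_0)]-v_0/(sT)$ is strictly increasing in $s$, because
\[
\frac{d}{ds}\,\frac{s}{m_0(c-m_0)}\;=\;\frac{s^2/4+\mu(c-\mu)}{m_0^2(c-m_0)^2}\;>\;0 .
\]
As $s\uparrow 2\mu$ (so $\rho\to 0$), $k^{\star}\to+\infty$. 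As $s\downarrow 0$ (so $\rho\to\infty$, using $v_0>0$), $k^{\star}\to-\infty$. The intermediate value theorem then yields unique $0<\rho_{\rm low}<\rho_{\rm high}<\infty$, with the lower corner holding on all of $[\rho_{\rm high},\infty)$. Your projection, order-parameter and kink arguments then go through verbatim, with $dk^{\star}/d\rho=(dk^{\star}/ds)/\rho'(s)<0$ replacing your $m_0$-derivative.
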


\begin{proof}
The closed form \eqref{eq:kstar} gives an unconstrained $k^{\star}$
that is a smooth strictly-decreasing function of $\rho$ on the
regime where $b/a > m_0$ holds (limiting behaviour: $k^{\star} \to
+\infty$ as $\sigma^2 \to 2\mu$, equivalently $m_0 \to 0$ and $\rho
\to 0$; $k^{\star} \to -\infty$ as $\sigma \to 0$, equivalently
$\rho \to \infty$). Strict monotonicity in $\rho$ follows by
direct differentiation of \eqref{eq:kstar} with respect to $\sigma$
(or equivalently $m_0$) combined with the monotonicity of
$\rho(\sigma)$.

By the intermediate-value theorem applied to the continuous
strictly-monotonic $k^{\star}_{\rm JKO}(\rho)$, there is a unique
$\rho_{\rm low}$ at which $k^{\star}_{\rm JKO}(\rho_{\rm low}) =
1$ and a unique $\rho_{\rm high} > \rho_{\rm low}$ at which
$k^{\star}_{\rm JKO}(\rho_{\rm high}) = k_{\rm lo}$. For
$\rho < \rho_{\rm low}$ the closed-form $k^{\star}_{\rm JKO} > 1$
violates the upper feasibility bound; the constrained minimum on
$\Schd_R$ is then the upper corner $(1, R^{\star}/a)$, where the
revenue constraint pins $\tw = R^{\star}/a$. Similarly, for
$\rho > \rho_{\rm high}$ the closed-form $k^{\star}_{\rm JKO} <
k_{\rm lo}$ violates the lower feasibility bound; the constrained
minimum is the lower corner $(k_{\rm lo}, 0)$, where the revenue
constraint pins $\tw = 0$. On the interior interval $(\rho_{\rm
low}, \rho_{\rm high})$ the closed form lies in $(k_{\rm lo}, 1)$
and is the constrained optimum unaltered.

Translation to the order parameter: at the upper corner
$\tw^{\star} = R^{\star}/a$ so $\phi^{\star} = 1$. At the lower
corner $\tw^{\star} = 0$ so $\phi^{\star} = 0$. On the interior,
$\phi^{\star}_{\rm int}(\rho) = (R^{\star} - b(1 - k^{\star}_{\rm
JKO}(\rho)))/R^{\star}$ is a continuous strictly-decreasing
function of $\rho$ (since $k^{\star}_{\rm JKO}$ is strictly
decreasing), running from $\phi^{\star}_{\rm int}(\rho_{\rm low})
= 1$ to $\phi^{\star}_{\rm int}(\rho_{\rm high}) = 0$. The kinks
in $\rho \mapsto \phi^{\star}(\rho)$ at the boundaries are the
mismatch between the saturated value of the order parameter
inside the corner phase (constant) and the strictly-monotonic
behaviour on the interior side.
\end{proof}

\begin{remark}[$W_2$ shows no phase transition in this
calibration]\label{rem:w2-no-transition}
The same construction applied to the $W_2$ optimum of
Theorem~\ref{thm:w2opt} yields, in the calibration of
Figure~\ref{fig:rho-crossover}, the degenerate phase diagram
\[
(k^{\star}_{\Wtwo}(\rho),\, \tau_{w,\Wtwo}^{\star}(\rho))
\;=\;
(k_{\rm lo},\, 0)
\quad \text{for every } \rho \geq 0.
\]
The $W_2$ optimum pins to the pure-flow-tax corner across the
whole regime axis: there is no transition, no kink, no
order-parameter saturation crossing. The choice of distortion
criterion therefore determines not just the location of the
optimum but whether the problem has phase structure at all. JKO
is phase-rich; $W_2$ in this calibration is phase-poor. The
implications for the political reading of the optimal-policy
recommendation are taken up in Section~\ref{sec:phase-transition} and
Section~\ref{sec:phase-diagram-direction}.
\end{remark}

\begin{remark}[Asymptotics]
The asymptotic behaviour summarised in Section~\ref{sec:results-intro}
--- the two optima coincide as $\rho \to \infty$ and diverge as
$\rho \to 0$ --- is recovered from Theorem~\ref{thm:crossover} together
with Remark~\ref{rem:w2-no-transition}: at $\rho \to \infty$ both
optima sit at $(k_{\rm lo}, 0)$ (coincide); at $\rho \to 0$ JKO
sits at $(1, R^{\star}/a)$ while $W_2$ sits at $(k_{\rm lo}, 0)$
(diverge by the full diameter of the matched-revenue contour).
\end{remark}

\begin{figure}[!htbp]
\centering
\IfFileExists{figures/fig_rho_crossover.pdf}{%
  \includegraphics[width=\textwidth]{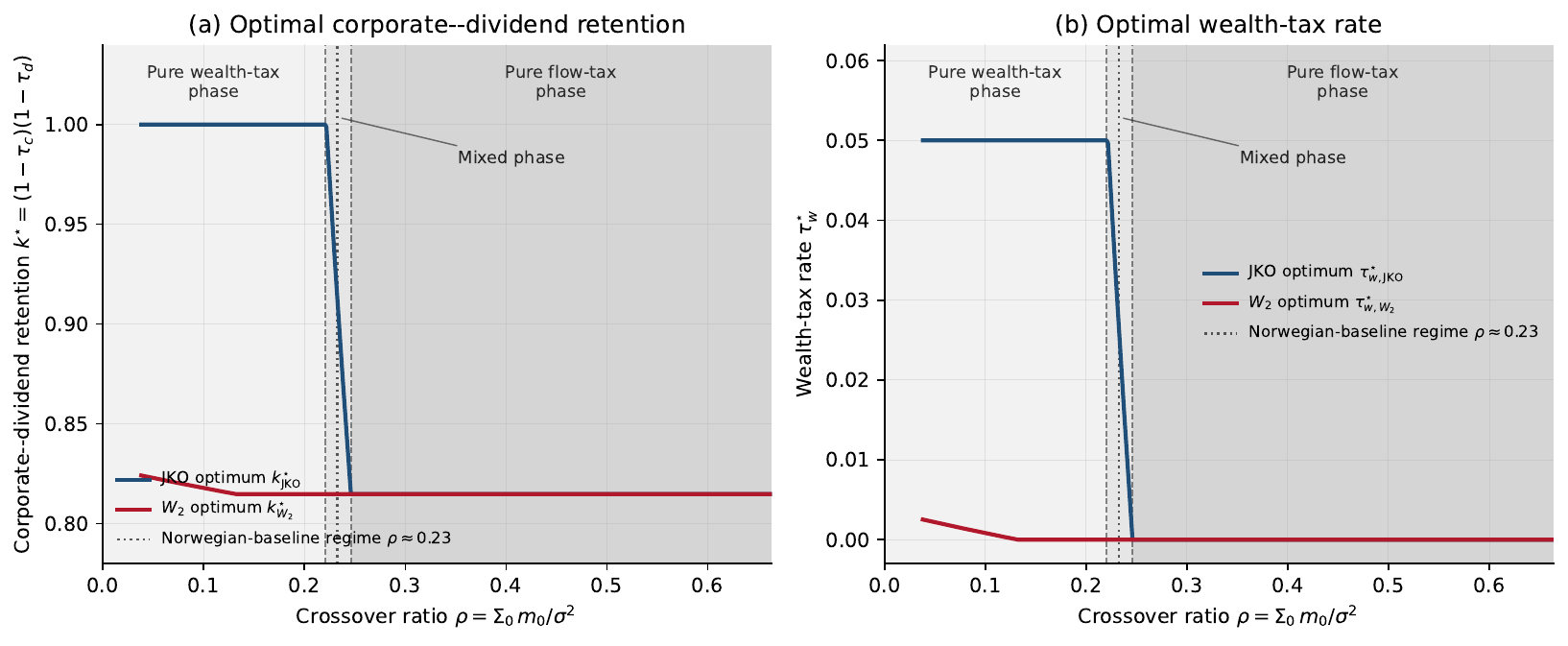}%
}{%
  \fbox{\parbox{0.95\textwidth}{\centering\vspace{1.5em}%
  \textbf{Figure placeholder.}\\[0.4em]
  Run \texttt{fp\_design/figures/fig\_rho\_crossover.py} locally to
  generate \texttt{fig\_rho\_crossover.pdf}.\\[0.4em]
  Per CLAUDE.md, figure scripts are run on the user's machine, not in the VM.\vspace{1.5em}}}%
}
\caption[The $\rho$-crossover: JKO and W$_2$ optima as functions of
the regime parameter, with the three optimal-policy phases marked]{%
\textbf{The $\rho$-crossover, in policy-lever coordinates, with
the three optimal-policy phases marked.}
The JKO-optimal point $(k^{\star}_{\rm JKO}, \tau_{w,{\rm JKO}}^{\star})$
(blue) and the $W_2$-optimal point $(k^{\star}_{W_2}, \tau_{w,W_2}^{\star})$
(red), each computed from the closed forms of
Theorem~\ref{thm:jkoopt} and Theorem~\ref{thm:w2opt}, traced as $\sigma$ varies
along a fixed-$T$ slice through the GBM parameter space. The
horizontal axis is the dimensionless regime parameter
$\rho = \Sigma_0\,m_0/\sigma^2$. Panel~(a) shows the optimal
corporate--dividend retention factor
$k^{\star} = (1-\tau_c)(1-\tau_d)$ --- a direct policy lever, with
$k = 1$ corresponding to no flow taxation and $k = 0$ to full
pass-through. Panel~(b) shows the optimal proportional wealth-tax
rate $\tau_w^{\star}$. The shaded background bands mark the three
optimal-policy phases of the JKO solution: the gold band
$\rho \in [0, \rho_{\rm low}]$ is the pure wealth-tax phase
($k^{\star} = 1$, $\tau_w^{\star} = R^{\star}/a$); the lavender
band $\rho \in [\rho_{\rm low}, \rho_{\rm high}]$ is the
mixed-instrument phase (interior $k^{\star}$ and $\tau_w^{\star}$);
the cyan band $\rho > \rho_{\rm high}$ is the pure flow-tax phase
($k^{\star} = k_{\rm lo}$, $\tau_w^{\star} = 0$). The dashed
vertical rules at $\rho_{\rm low}, \rho_{\rm high}$ are the phase
boundaries; they are kinks in $k^{\star}(\rho)$ and
$\tau_w^{\star}(\rho)$. Calibration: $a = 1.0$, $b = 0.27$,
$R^{\star} = 0.05$, chosen to keep $\tau_w \in [0, 0.05]$ over the
full feasibility range and to keep the JKO solution interior at
the Norwegian-baseline regime ($\sigma = 0.30$, $T = 5$,
$\rho \approx 0.23$). The dotted vertical rule marks
$\rho_{\rm Norway}$, which sits inside the mixed-instrument
phase --- a regime statement only; we do not compare to Norway's
actual wealth-tax schedule on this figure, since that schedule is
bracket-based and lies outside the proportional (C1)--(C3) class
treated here. The $W_2$ optimum is degenerate in this
calibration --- pinned to the pure flow-tax corner across the
whole $\rho$ range, with no phase transition --- so panel~(b)
shows the red curve flat at $\tau_w = 0$.}
\label{fig:rho-crossover}
\end{figure}

\begin{figure}[!htbp]
\centering
\IfFileExists{figures/fig_policy_phase_portrait.pdf}{%
  \includegraphics[width=0.85\textwidth]{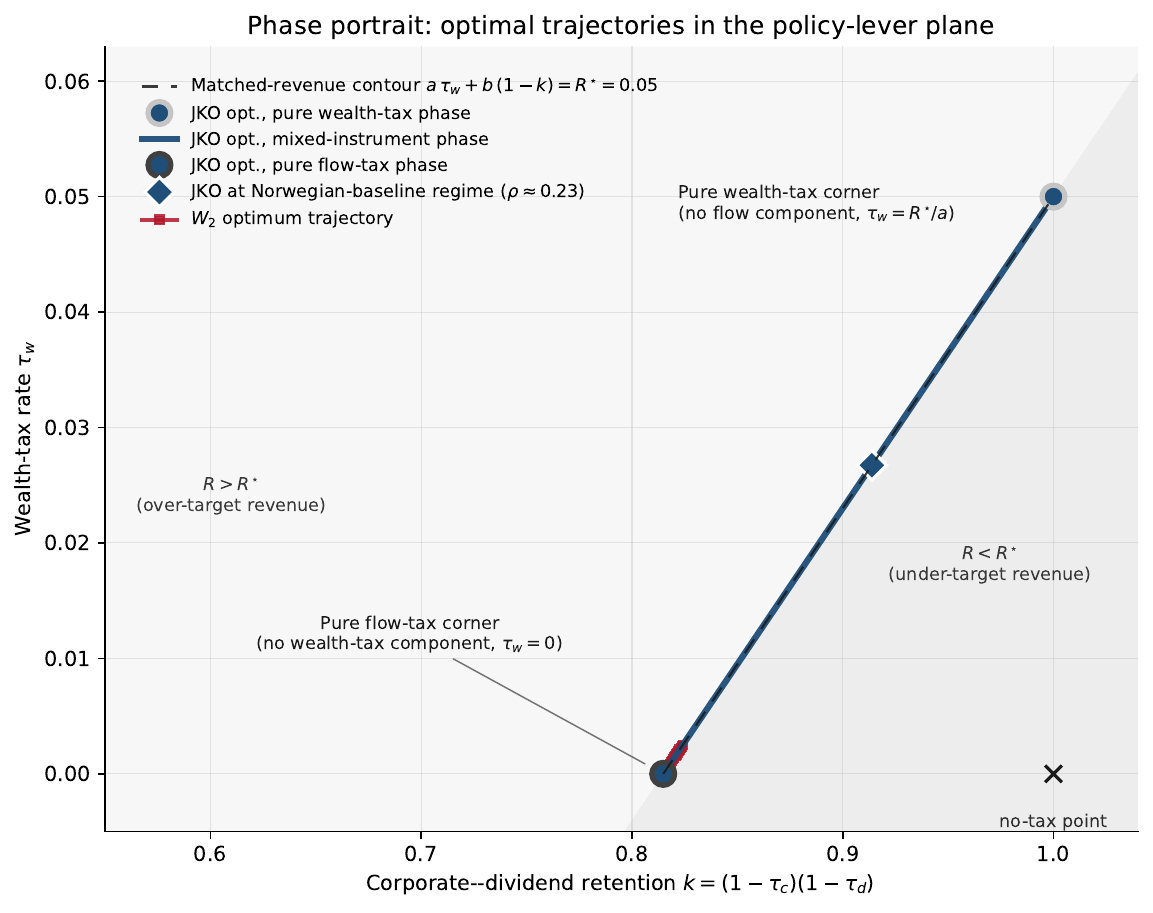}%
}{%
  \fbox{\parbox{0.95\textwidth}{\centering\vspace{1.5em}%
  \textbf{Figure placeholder.}\\[0.4em]
  Run \texttt{fp\_design/figures/fig\_policy\_phase\_portrait.py} locally
  to generate \texttt{fig\_policy\_phase\_portrait.pdf}.\vspace{1.5em}}}%
}
\caption[Phase portrait of optimal trajectories in the policy-lever
plane, with the three JKO phases marked]{%
\textbf{Phase portrait of optimal trajectories in the policy-lever
plane, with the three JKO phases marked.}
The matched-revenue contour $a\,\tw + b\,(1-k) = R^{\star}$
(dashed grey) is a line segment from the \emph{pure flow-tax
corner} $(k_{\rm lo}, 0)$, where $k_{\rm lo} = 1 - R^{\star}/b$,
to the \emph{pure wealth-tax corner} $(1, R^{\star}/a)$; every
point on this segment is a feasible (C1)--(C3) policy package
satisfying the revenue target. The pale-gold half-plane above
the contour is the \emph{over-target} region $R > R^{\star}$;
the pale-blue half-plane below is the \emph{under-target} region
$R < R^{\star}$, with the no-tax point $(k, \tau_w) = (1, 0)$
marked at its right-most extreme. The JKO-optimal point traces
along the matched-revenue contour as $\rho$ varies and partitions
into three phase pieces: the upper-right corner (gold-rimmed
circle) is occupied throughout the pure wealth-tax phase
$\rho \in [0, \rho_{\rm low}]$; the interior arc (thick blue
line) is the mixed-instrument phase $\rho \in [\rho_{\rm low},
\rho_{\rm high}]$, on which the Norwegian-baseline regime
($\rho \approx 0.23$) sits, marked by the blue diamond; the
lower-left corner (cyan-rimmed circle) is occupied throughout the
pure flow-tax phase $\rho > \rho_{\rm high}$. The $W_2$ optimum
is degenerate in this calibration --- pinned to the pure flow-tax
corner for every $\rho$, shown as a single red square. We do not
overlay Norway's actual wealth-tax policy on this figure: Norway
operates a bracket-based schedule (an exemption threshold plus a
marginal rate) that lies outside the proportional (C1)--(C3)
class. A direct policy comparison requires the bracket extension
flagged in the Open Questions and is deferred to the companion
paper. The phase portrait makes the policy interpretation of
Figure~\ref{fig:rho-crossover} immediate: JKO sweeps from the pure
wealth-tax corner (low $\rho$) through an interior arc (mixed
phase) to the pure flow-tax corner (high $\rho$); $W_2$ pins to
the pure flow-tax corner throughout.}
\label{fig:phase-portrait}
\end{figure}

\subsection{Phase-transition reading}\label{sec:phase-transition}

The structure of Figures~\ref{fig:rho-crossover}
and~\ref{fig:phase-portrait} is that of a phase transition in the constrained-optimisation sense
of active-set switching. Three identifications make the
correspondence with statistical-mechanics phase transitions
precise.

\paragraph{Order parameter.} The natural order parameter is the
\emph{wealth-tax revenue share}
\begin{equation}\label{eq:order-parameter}
\phi^{\star}(\rho)
\;=\;
\frac{a\,\tau_w^{\star}(\rho)}{R^{\star}}
\;\in\;
[0, 1],
\end{equation}
the fraction of matched revenue $R^{\star}$ collected through the
proportional wealth-tax channel at the JKO optimum. It saturates
at $\phi^{\star} = 1$ in the pure wealth-tax phase
$\rho \in [0, \rho_{\rm low}]$ (all revenue from $\tau_w$),
saturates at $\phi^{\star} = 0$ in the pure flow-tax phase
$\rho > \rho_{\rm high}$ (all revenue from $\tau_c, \tau_d$), and
varies smoothly through $(0, 1)$ in the mixed-instrument phase
$\rho \in [\rho_{\rm low}, \rho_{\rm high}]$. The kinks of
$\phi^{\star}$ at $\rho_{\rm low}, \rho_{\rm high}$ are
order-parameter saturations onto the simplex boundary, exactly
analogous to a magnetisation pinning at $\pm 1$ outside a
coexistence window.

\paragraph{Control parameter.} The dimensionless ratio
$\rho = \Sigma_0\,m_0 / \sigma^2$ plays the role of a
\emph{drift-to-diffusion control parameter}: high $\rho$ means
the GBM is drift-dominated (diffusion-cold), low $\rho$ means
diffusion-dominated (diffusion-hot). The JKO criterion weights
mean-distortion and spread-distortion of the post-tax population,
and $\rho$ encodes which of the two is binding at the optimum;
the order parameter $\phi^{\star}$ responds accordingly. In the
statistical-mechanics analogy $\rho$ acts as an inverse
temperature for the competition between the two distortion modes.

\paragraph{Free-energy landscape.} The optimal JKO free-energy
gap on the matched-revenue contour,
\begin{equation}\label{eq:free-energy-gap}
\Delta F^{\star}(\rho)
\;=\;
\min_{(k, \tau_w) \in \mathcal{C}(R^{\star})}
\Delta F\bigl(k, \tau_w; \rho\bigr),
\end{equation}
where $\mathcal{C}(R^{\star})
= \{(k, \tau_w) : a\,\tau_w + b\,(1-k) = R^{\star},\,
\tau_w \geq 0,\, k \in [k_{\rm lo}, 1]\}$, is continuous and
convex in $\rho$ with first-derivative kinks at the two phase
boundaries. The kinks are second-order in the constrained-
optimisation sense: $\Delta F^{\star}$ itself is continuous,
$\partial \Delta F^{\star} / \partial \rho$ jumps. The corner
phases are not metastable: the minimisation is convex with a
unique global minimum at every $\rho$, so the analogy is closer
to a Maxwell-construction kink in mean-field theory than to a
Ginzburg--Landau critical point. There is no diverging
susceptibility.

\paragraph{Symmetry-breaking reading.} Each corner phase has one
instrument switched off --- a discrete two-state choice between
``tax the stock'' and ``tax the flow''. The mixed-instrument phase
is the symmetric coexistence regime where both channels are
active simultaneously, and the Norwegian-baseline regime ($\rho
\approx 0.23$, in the sense of Definition~\ref{def:rho}) sits in the
middle of it.

\paragraph{Phase-richness contrast with $W_2$.} The same
construction applied to the $W_2$ optimum gives a degenerate
phase diagram in this calibration: the optimum sits at the pure
flow-tax corner for every $\rho$, with no transition and a single
phase. The two distortion criteria therefore differ not just in
the location of the optimum but in whether the problem has phase
structure at all. JKO is phase-rich; $W_2$ is phase-poor. The
choice of distortion criterion is itself a choice of free-energy
functional, and the political reading of the optimal policy
follows from that choice.

\subsection{The Norwegian baseline}\label{sec:norwegian}

Theorem~\ref{thm:crossover} locates the JKO optimum on $\Schd_R$ as a
function of the regime parameter $\rho$, but does not say where
empirical regime parameters fall on that axis. We use Norwegian
household-portfolio statistics to anchor the GBM parameters
$(\mu, \sigma, T, v_0)$ and locate the corresponding
$\rho_{\rm Norway}$ relative to the phase boundaries
$\rho_{\rm low}, \rho_{\rm high}$. The long-run distributional
landscape against which the calibration sits is the
\citet{AabergeEtAl2025} 1912--2019 Norwegian wealth-inequality
series: tax-assessed Gini extremely high ($\geq 0.89$) through the
first half of the twentieth century, falling sharply to a trough
of $0.776$ at 1968, mildly drifting through the 1970s, and rising
to $0.862$ by 2019 (with the level dropping by roughly $15$
percentage points but the trend preserved when housing market
values replace tax-assessed values). The Norwegian baseline that
follows is calibrated to the post-1985 regime in which the Gini
sits in $[0.82,\,0.86]$. \emph{This is a regime
calibration, not a policy comparison.} Norway's actual wealth-tax
schedule is bracket-based (an exemption threshold combined with a
marginal rate above the threshold) and lies outside the
proportional (C1)--(C3) class. We therefore make no quantitative
claim in this paper about how the JKO recommendation
$(k^{\star}_{\rm JKO}, \tau_{w,{\rm JKO}}^{\star})$ relates to
Norway's actual policy; the matching analysis lives on the
bracket sub-class and is the subject of the companion paper
flagged in Section~\ref{sec:open-questions}.

The calibration we adopt is the canonical Norwegian-flavoured
baseline:
\begin{equation}\label{eq:norway-baseline}
\mu = 0.07, \qquad \sigma = 0.30, \qquad T = 5, \qquad v_0 = 0.25,
\end{equation}
chosen to be representative of equity-heavy household portfolios
held over a five-year planning horizon; the empirical
justification is given in Section~\ref{sec:calibration}. With these
values $\Sigma_0 \approx 0.83$, $m_0 = 0.025$, and
\begin{equation}\label{eq:rho-norway}
\rho_{\rm Norway}
\;=\; \Sigma_0 \cdot \frac{m_0}{\sigma^2}
\;\approx\; 0.231.
\end{equation}

\paragraph{The Norwegian-baseline regime sits in the mixed phase.}
For the revenue calibration $a = 1$, $b = 0.27$, $R^{\star} = 0.05$
used in Figure~\ref{fig:rho-crossover}, the phase boundaries of
Theorem~\ref{thm:crossover} evaluate numerically to
\begin{equation}\label{eq:norway-boundaries}
\rho_{\rm low} \;\approx\; 0.221,
\qquad
\rho_{\rm high} \;\approx\; 0.246.
\end{equation}
The value $\rho_{\rm Norway} \approx 0.231$ sits inside the
mixed-instrument phase $[\rho_{\rm low}, \rho_{\rm high}]$, near
its centre. The JKO optimum at this regime, \emph{within the
proportional (C1)--(C3) class}, is therefore strictly interior:
\begin{equation}\label{eq:norway-jko-opt}
\bigl(k^{\star}_{\rm JKO},\, \tau_{w,{\rm JKO}}^{\star}\bigr)
\;\approx\;
(0.91,\, 0.027),
\end{equation}
calling for a corporate--dividend retention factor near $0.91$
(combined corporate plus dividend taxation removing about
$9\%$ of pre-flow income) combined with a proportional wealth-tax
rate near $2.7\%$. The reading is conditional: it is what the
JKO criterion picks if the wealth tax is restricted to a single
flat rate and if revenue is targeted at $R^{\star} = 0.05$.
Removing either restriction --- allowing a bracket schedule or
varying $R^{\star}$ --- changes the recommendation.

\paragraph{The mixed phase is narrow.} The width
$\rho_{\rm high} - \rho_{\rm low} \approx 0.025$ in the
canonical calibration is small relative to the $\rho$-axis range
spanned by realistic GBM perturbations. Translated to volatility
units, the phase boundaries fall at
\begin{equation*}
\sigma(\rho_{\rm low}) \;\approx\; 0.304,
\qquad
\sigma(\rho_{\rm high}) \;\approx\; 0.297,
\end{equation*}
so the mixed-instrument phase corresponds to
$\sigma \in [0.297, 0.304]$, a band of width approximately $0.7$
percentage points around $\sigma_{\rm Norway} = 0.30$.
A $\sigma$-shift of even half a percentage point in either
direction tips the JKO recommendation across a phase boundary:
raising $\sigma$ from $0.30$ to $0.305$ lowers $\rho$ to
$\rho_{\rm low}$ and pushes the optimum into the pure wealth-tax
phase; lowering $\sigma$ from $0.30$ to $0.297$ raises $\rho$ to
$\rho_{\rm high}$ and pushes the optimum into the pure flow-tax
phase. The Norwegian baseline lies near the cusp between two
corner regimes, and the political reading shifts qualitatively
across each cusp --- pure proportional wealth tax on the
high-volatility side, pure corporate--dividend taxation on the
low-volatility side. This fragility is itself a finding of the
model: the JKO criterion gives a clean mixed-instrument
recommendation at Norwegian-baseline parameters, but the
recommendation is not robust to small shifts in $\sigma$ within
the empirically plausible range for household equity portfolios.

\paragraph{Forward link: effective volatility for non-equity
portfolios.} The volatility $\sigma = 0.30$ in
\eqref{eq:norway-baseline} is empirically appropriate for
equity-heavy portfolios; Norwegian household wealth, however, is
heavily weighted toward owner-occupied housing, with much lower
realised volatility on primary residences. The relevant
\emph{effective volatility} that enters $\rho$ for a typical
Norwegian portfolio is therefore lower than $0.30$, raising
$\rho$ above $\rho_{\rm Norway}$ and potentially pushing the
JKO recommendation into the pure flow-tax phase.
Section~\ref{sec:calibration} returns to this question; for now, the
mixed-phase recommendation \eqref{eq:norway-jko-opt} should be
read as the equity-portfolio limit, not as a portfolio-averaged
recommendation.

\subsection{Mechanism}\label{sec:mechanism}

The phase structure of Theorem~\ref{thm:crossover} is a consequence of
the contrasting first-order-condition structures of the two
distortion criteria, derived in Section~\ref{sec:foc}. We restate the
contrast and trace it through to the corner-versus-interior
behaviour.

\paragraph{JKO has constant marginal cost in $\tw$.}
The wealth-tax derivative
$\partial_{\tw} \dF = m_0 T / \sigma^2$ in
\eqref{eq:partial-dF} is constant in $(k, \tw)$: each unit of
proportional wealth taxation increases the JKO free-energy gap
by the same amount $m_0 T/\sigma^2$, regardless of where on
$\Schd_R$ the policy currently sits. The unit price of the
$\tw$ instrument is $m_0/\sigma^2$ per unit of horizon $T$, the
log-drift-to-volatility ratio. The $k$-derivative
$\partial_k \dF = -\sigma^2 T / [2(v_0 + k\sigma^2 T)] - m_0^2 T /
\sigma^2$ has a hyperbolic-in-$k$ contribution from the
log-spread term and a constant contribution from the
mean-shift term, so the $k$-derivative is point-dependent only
through the log-spread piece.

\paragraph{$W_2$ has marginal cost proportional to current
displacement.}
The wealth-tax derivative
$\partial_{\tw} \Wtwo^2 = -2T\,\varepsilon_M$ in
\eqref{eq:partial-W2} is proportional to the current mean-shift
$\varepsilon_M = (k-1) m_0 T - \tw T$: each unit of $\tw$ costs
more the further the policy already sits from the no-tax point.
The $k$-derivative is similarly proportional to the current
$(\varepsilon_M, \varepsilon_\Sigma)$ pair. Both instruments
exhibit \emph{quadratic-cost-on-perturbation} structure: the
marginal price grows with how much policy displacement has
already been imposed.

\paragraph{Why JKO admits phase transitions.} The constant
marginal cost of $\tw$ under JKO turns the problem into a
fixed-price linear procurement of revenue from two channels
$(\tw, 1-k)$, with the $k$-channel exhibiting moderate
non-linearity from the log-spread term. The optimal allocation
between the two channels is determined by relative prices --- the
JKO multiplier
$\mu_{\rm JKO} = m_0 T/(a \sigma^2)$ is itself a function of the
GBM parameters, not of the operating point --- and this relative
price reverses sign at the boundaries
$\rho_{\rm low}, \rho_{\rm high}$ defined in
Theorem~\ref{thm:crossover}. Below $\rho_{\rm low}$ the wealth-tax
channel is cheap relative to the flow-tax channel and the
optimum saturates at $\tw = R^{\star}/a$; above $\rho_{\rm
high}$ the wealth-tax channel is expensive relative to the
flow-tax channel and the optimum saturates at $\tw = 0$;
between, the two channels are balanced and the optimum sits in
the interior. The fixed-price structure makes \emph{which channel
is cheaper} a clean piecewise question, and the answer flips
discretely at the boundaries.

\paragraph{Why $W_2$ pins to a corner in this calibration.} Under
$W_2$, an extra unit of $\tw$ adds $-T \cdot \tw$ directly to
the mean displacement $\varepsilon_M$, so its cost enters
$\Wtwo^2$ \emph{quadratically} via $\varepsilon_M^2$. A
revenue-equivalent unit of pass-through $1-k$ enters
$\varepsilon_M$ only through the small factor $(k-1)\,m_0\,T$
(and additionally through $\varepsilon_\Sigma$). The wealth-tax
channel therefore has a much higher displacement-per-revenue
than the flow-tax channel, and the $W_2$ optimum prefers the
flow-tax channel exclusively, setting $\tw = 0$. The lower
corner $(k_{\rm lo}, 0)$ is $W_2$-optimal across all $\rho$ in
the canonical calibration because the displacement-per-revenue
gap holds at every $\sigma$ in the relevant range. The economic
content of that gap is exposed in the bluntness-index paragraph
below.

\paragraph{The role of $\rho$.} The crossover ratio
$\rho = \Sigma_0 \cdot m_0/\sigma^2$ enters the JKO multiplier
linearly through $m_0/\sigma^2$ and is rescaled by the natural
spread $\Sigma_0$. The relative price of the two instruments
under JKO is governed by $\rho$, and the boundaries
$\rho_{\rm low}, \rho_{\rm high}$ are precisely the values at
which the JKO closed form \eqref{eq:kstar} hits the corners of
$[k_{\rm lo}, 1]$. The phase structure is a direct consequence
of the constant-marginal-cost-in-$\tw$ structure of JKO; it has
no analogue under $W_2$.

The mechanism therefore separates the two findings of the paper:
\emph{which} criterion produces the phase structure (JKO's linear
free-energy structure does, $W_2$'s quadratic transport-cost
structure does not), and \emph{where} the phase boundaries fall
(set by $\rho_{\rm low}, \rho_{\rm high}$, computable in closed
form from the GBM parameters and the revenue weights). Both are
consequences of the FOC analysis of Section~\ref{sec:foc}, refracted
through the regime parameter $\rho$ of Definition~\ref{def:rho}.

\paragraph{The bluntness index.}\label{sec:bluntness}
The mechanism admits a clean economic reading once we make the
geometric mean log-return $m_0 = \mu - \sigma^2/2$ explicit.
Recall that $m_0$ is the expected log-wealth growth rate under
no taxation, combining arithmetic drift $\mu$ with volatility
drag $\sigma^2/2$ (the It\^o / compounding correction); at the
Norwegian-baseline calibration, $m_0 = 0.025$, about a third of
the $7\%$ headline arithmetic return.

The two channels of the matched-revenue contour produce mean
shifts in log-wealth that scale very differently in $m_0$:
\begin{equation}\label{eq:displacement-channels}
\varepsilon_M^{\rm wealth}(\tw) \;=\; -\tw\,T,
\qquad
\varepsilon_M^{\rm flow}(k) \;=\; (k - 1)\,m_0\,T.
\end{equation}
Wealth-tax displacement is \emph{independent} of $m_0$ --- a
fixed mean shift per dollar of revenue raised, regardless of how
productively the underlying wealth is deployed. Flow-tax
displacement is \emph{linear} in $m_0$ --- proportional to the
geometric return on the wealth being taxed. Dividing by the
revenue collected through each channel, the ratio of the two
displacements defines the \emph{bluntness index}
\begin{equation}\label{eq:bluntness}
B(m_0) \;:=\;
\frac{|\varepsilon_M^{\rm wealth}|/R}
     {|\varepsilon_M^{\rm flow}|/R}
\;=\; \frac{b}{a\,m_0}.
\end{equation}
$B$ is the displacement-per-revenue overshoot of the
wealth-tax channel relative to the flow-tax channel. At the
Norwegian-baseline parameters $b/a = 0.27$, $m_0 = 0.025$,
$B \approx 10.8$: each unit of revenue collected through the
wealth tax produces about ten times more mean displacement than
the same unit collected through the flow tax.

The two distortion criteria weight $B$ very differently. JKO's
constant marginal cost in $\tw$ produces a per-revenue penalty
proportional to $B$, hence to $1/m_0$; $W_2$'s squared-displacement
penalty produces a per-revenue penalty proportional to $B^2$, hence
to $1/m_0^2$. The same algebraic asymmetry that drives the
linear-vs-quadratic structural difference of \eqref{eq:partial-dF}
and \eqref{eq:partial-W2} acquires, through this lens, an
empirical content: in the small-$m_0$ regime relevant to actual
wealth-tax debate, the quadratic amplification under $W_2$ makes
the wealth-tax channel prohibitively expensive while JKO remains
willing to pay the linear penalty.

The decomposition $m_0 = \mu - \sigma^2/2$ exposes the joint
sensitivity of the bluntness index to the two GBM primitives:
\begin{equation*}
\frac{\partial B}{\partial \mu}
\;=\; -\frac{b}{a\,m_0^2},
\qquad
\frac{\partial B}{\partial \sigma}
\;=\; \frac{b\,\sigma}{a\,m_0^2},
\end{equation*}
with opposite signs. Higher expected return $\mu$ (at fixed $\sigma$)
\emph{lowers} the bluntness index --- the wealth tax looks
proportionally less wasteful on high-return assets. Higher risk
$\sigma$ (at fixed $\mu$) \emph{raises} the bluntness index, via
the volatility drag eating $m_0$. The wealth tax is therefore
sharpest on high-return-low-volatility assets and bluntest on
low-return-high-volatility assets. This is the same asymmetry
documented in the heterogeneous-returns analysis of
\citet{Froeseth2026H}, where it is phrased as ``wealth tax does
not scale with realised returns'': the FP-language $1/m_0$
factor is the formal expression of the same fact, and the
$1/m_0^2$ amplification under $W_2$ is the quadratic version
that distinguishes the two criteria.

The regime parameter $\rho = \Sigma_0\,m_0/\sigma^2$ then has
a clean reading as the inverse of the bluntness index times
$\Sigma_0\,a/b$: high $\rho$ corresponds to low $B$ (a sharp
wealth-tax channel), low $\rho$ to high $B$ (a blunt one). The
phase boundaries $\rho_{\rm low}, \rho_{\rm high}$ at which the
JKO criterion flips between corner solutions are precisely the
values of the bluntness index at which the linear penalty
$B(m_0)$ matches the relative price of the two channels
$b/a$ at the constraint frontier --- a substantive economic
statement, not a structural artefact of the criterion choice.

\paragraph{Scale crossover and the planning horizon.}
The bluntness penalty also has a temporal dimension.
The JKO multiplier $\mu_{\rm JKO} = m_0\,T/(a\,\sigma^2)$ scales
linearly with the horizon $T$, and the wealth-tax
displacement-per-revenue, also proportional to $T$, accumulates
with it. The two channels therefore scale differently as $T$
varies through the natural crossover scale
\begin{equation}\label{eq:T-c}
T_c \;:=\; v_0\,/\,\sigma^2,
\end{equation}
the time at which the cumulative diffusion $\sigma^2 T$ matches
the initial log-wealth variance $v_0$ and $\Sigma_0 =
\sqrt{v_0 + \sigma^2 T}$ transitions from initial-spread-dominated
to diffusion-dominated. Below $T_c$, $\rho \approx \sqrt{v_0}\,
m_0/\sigma^2$ is independent of $T$; above $T_c$,
$\rho \approx m_0\sqrt{T}/\sigma$ grows like $\sqrt{T}$. The
JKO recommendation therefore depends on the planning horizon in
a regime-shifting way, taken up in detail in
Section~\ref{sec:horizon}.

\section{Neutrality interpretation}\label{sec:neutrality}

\subsection{Connection to the neutral-tax class}\label{sec:p1-link}

\citet{Froeseth2026N, Froeseth2026F} establish that the
(C1)--(C3) conditions characterise the schedule class
$\mathcal{N}$ that preserves neutrality with respect to portfolio
choice under homogeneous returns and CRRA preferences: for
$\sched \in \mathcal{N}$, the household's risk-bearing decisions
are not redirected by the imposition of the tax, and the
geometric structure of the wealth process is preserved up to a
drift shift and rescale. The (C1)--(C3) class of the present
paper is therefore identically the neutrality-preserving class
of the corpus.

This identification has a sharp consequence for the criterion-
choice debate. \emph{Within} (C1)--(C3), both criterion optima
sit inside $\mathcal{N}$ by construction --- the optimisation is
restricted to a subset of the neutrality class, and the optima
of Theorems~\ref{thm:jkoopt} and~\ref{thm:w2opt} are particular
elements of that subset. The choice between JKO and $W_2$ \emph{within}
(C1)--(C3) does not adjudicate between neutrality and
non-neutrality; it adjudicates between two points on the same
neutrality-preserving manifold. The phase structure of
Theorem~\ref{thm:crossover} and the bluntness contrast of
Section~\ref{sec:bluntness} live entirely within $\mathcal{N}$.

The neutrality question only bites once the schedule class is
opened up. \citet{Froeseth2026W} establishes that the
$W_2$-optimal schedule on the full admissible class
$\Schd \supset \mathcal{N}$ is the continuously progressive
$\sched^{\star}(x) = \lambda\,x^2$, which lies \emph{outside}
$\mathcal{N}$: the rate $r(x) = \lambda x$ rises linearly in
wealth, the assessment is not uniform, and (C3) is violated by
construction. The $W_2$ criterion, given access to the larger
class, opts for a non-neutral schedule. Whether the JKO
criterion does the same on $\Schd$ is open
(Section~\ref{sec:open-questions}); the present paper's
contribution is to characterise the within-(C1)--(C3) optimum
cleanly so that the open-class extension can be measured against
it.

The summary slogan: the (C1)--(C3) restriction is the price the
JKO criterion is willing to pay to preserve neutrality; the
$W_2$ criterion, given a richer class, is willing to sacrifice
neutrality for distributional compression. Inside (C1)--(C3),
both criteria preserve neutrality and the disagreement is about
the proportional mix of stock-tax versus flow-tax channels;
outside, the disagreement is also about whether to preserve
neutrality at all.

\subsection{The Mirrleesian and Saez--Zucman
traditions}\label{sec:traditions}

The contrast between the JKO and $W_2$ optima admits a normative
reading that ties each criterion to a long-standing tradition in
optimal-tax theory. The reading clarifies why the criterion choice
is not an arbitrary modelling decision but the FP-language
expression of a real disagreement about what optimal taxation
ought to minimise. We first state the two traditions, then map
them onto the JKO/$W_2$ pair via the bluntness mechanism of
Section~\ref{sec:bluntness}.

\paragraph{Mirrleesian / decision-distortion tradition.} The
Mirrleesian programme \citep{Mirrlees1971, DiamondSaez2011,
Mirrlees2011, SaezStantcheva2018} treats optimal taxation as a
problem of
\emph{minimising decision distortion}: the tax should leave the
household's allocation problem as identifiable from outcomes as
possible. The information-theoretic reading is that an optimal
tax distorts the post-tax population's behaviour by as little as
possible, where ``as little as possible'' is measured against the
no-tax counterfactual via relative entropy or its Fokker--Planck
gradient-flow cousin, the JKO free-energy gap. Allocational
neutrality --- not redirecting risk-bearing decisions, not
discriminating across asset classes --- is a direct consequence
of minimising this kind of distortion, and the (C1)--(C3) class
of \citet{Froeseth2026F} is precisely the schedule class that
preserves it under homogeneous returns.

\paragraph{Saez--Zucman / distributional-compression tradition.}
The Piketty--Saez--Zucman programme \citep{Piketty2014,
PikettyZucman2014, SaezZucman2019, BlanchetFournierPiketty2022}
treats optimal taxation as a problem of \emph{minimising
distributional displacement from a target}. The
transport-geometric reading is that an optimal tax compresses
the population's distribution toward a more egalitarian shape,
where ``compression'' is measured by geometric distance on the
space of probability measures --- and the canonical such
distance on $\Pcal_2(\RR)$ is the 2-Wasserstein metric.
Distributional compression is not in itself a goal of the
Mirrleesian programme; it is the goal of the Saez--Zucman
programme. \citet{Froeseth2026W} establishes that the
$W_2$-optimal schedule on the full schedule class $\Schd$ is the
continuously progressive $\sched^{\star}(x) = \lambda x^2$, with
threshold-bracket schedules emerging as the
implementability-constrained finite-bracket version of the same
target.

\paragraph{The bluntness reading bridges the two.} The mechanism
of Section~\ref{sec:bluntness} reframes the tradition split in
terms of how each criterion weights the
\emph{wealth-tax bluntness index} $B(m_0) = b/(a\,m_0)$. JKO
weights it linearly: $\dF$ rises with $B$, but only at unit price
$m_0/\sigma^2$ per unit of wealth-tax revenue, so the wealth tax
remains a usable instrument when $m_0$ is small. $\Wtwo^2$
weights it quadratically: $\Wtwo^2$ rises with $B^2$ via the
squared-displacement penalty, so the wealth tax becomes
prohibitively expensive when $m_0$ is small.

The Mirrleesian tradition therefore reads as a \emph{linear penalty
on bluntness}: instruments that fail to scale with returns are
priced at their actual decision-distortion cost, no more.
The Saez--Zucman tradition reads as a \emph{quadratic penalty on
bluntness}: instruments that fail to scale with returns are
priced at the squared cost of the displacement they impose, in
keeping with a programme that prioritises distributional
compression and is therefore acutely sensitive to per-revenue
displacement waste. The two normative positions are not
disagreements about what taxes do; they are disagreements about
how much to penalise the same algebraic asymmetry.

\paragraph{Characterisation rather than adjudication.} The paper
does not pick a side. We take it that both traditions are
defensible, that the Norwegian wealth-tax debate has been
arguing across them for decades without resolution, and that the
empirical regime parameter $\rho$ falls inside the disputed band
where the two criteria \emph{do} disagree. Our contribution is
to map the disagreement: we identify when the choice of
normative criterion changes the optimal proportional schedule
(in the mixed-instrument phase $\rho \in [\rho_{\rm low},
\rho_{\rm high}]$) and when it does not (the pure flow-tax phase
at high $\rho$, where both criteria coincide); and we identify
the algebraic mechanism through which the choice operates (the
bluntness index $B(m_0)$, weighted linearly by JKO and
quadratically by $\Wtwo^2$). A reader committed to one tradition
or the other can extract the within-criterion result and
disregard the contrast; a reader uncommitted gains a structured
view of where the choice between traditions actually matters.

\subsection{(C3) violations as exogenous policy}\label{sec:c3-violations}

The (C3) condition --- uniform assessment $\alpha_i = \alpha$
--- is empirically violated in nontrivial ways by the
jurisdictions that operate wealth taxes. The most consequential
violation in practice is the \emph{bracket structure}: real
schedules apply rate $0$ below an exemption threshold $X_0$ and
a positive marginal rate $r$ above (and possibly a second rate
$r_2$ above a higher threshold). This is exactly the case where
$\alpha$ is wealth-dependent, hence outside (C1)--(C3). The
extension is technically tractable --- the FP equation has
piecewise-constant drift on log-wealth, the post-tax density is
piecewise-Gaussian with matching at the bracket boundaries, and
both criteria $\Delta F$ and $W_2^2$ remain analytic in
elementary functions plus the error function. The matched-revenue
optimisation lives on a 3-parameter space $(k, X_0, r)$ for the
2-bracket case rather than the 2-parameter $(k, \tw)$ plane of
(C1)--(C3).\footnote{%
The recipe is: solve the FP equation separately on each bracket,
where it has constant-coefficient drift; impose continuity of
density and of probability flux $J = -(m_0 - r_i)\,p +
(\sigma^2/2)\,\partial_y p$ at each threshold; integrate
$\Delta F$ and $W_2^2$ over the bracket regions, exploiting the
truncated-Gaussian moment-of-tail structure that produces error
functions; assemble the matched-revenue Lagrangian and solve the
$(k, X_0, r)$ FOCs. The full development is the subject of the
companion paper sketched in Section~\ref{sec:open-questions}.}
The other (C3) violations --- asset-class discounts (residential
real estate at 25\% of market value in Norway, for instance),
liquidity adjustments, and political-economy carve-outs ---
are properly treated as fixed exogenous inputs in the
calibration of Section~\ref{sec:calibration} rather than as design
levers. Identifying which (C3) violations are framework-rational
under which criterion, and which are political artefacts, is a
secondary question of the bracket-extension companion paper.

\paragraph{Brackets as a JKO-rational response to bluntness.}
The bluntness mechanism of Section~\ref{sec:bluntness} gives an
economic motivation for bracket-based schedules even from a
JKO perspective --- not only from the $W_2$ side, where the
distributional-compression argument is well-established. The
(C1)--(C3) condition forces a single uniform assessment
$\alpha_i = \alpha$, which by construction prevents the wealth-tax
rate from \emph{scaling with returns}. The $1/m_0$ bluntness
penalty is therefore unavoidable within (C1)--(C3); the JKO
criterion accepts it as the price of using a stock-based
instrument and pays it in linear weighting. A bracket schedule
with threshold $X_0$ and marginal rate $r$ above introduces
wealth-dependent assessment: the effective wealth-tax rate at
total wealth $X$ is $r\,\max(0, X - X_0)/X$, which rises smoothly
from zero at the threshold toward $r$ at infinity. This is a
structural recovery, by policy design, of the
scaling-with-return behaviour that (C1)--(C3) excludes: in cohorts
where geometric returns and wealth levels covary, the bracket
schedule taxes proportionally more of the high-return,
above-threshold wealth and less of the low-return, near-threshold
wealth. In bluntness-index language, brackets reduce
$B$ in the regions of the wealth distribution where it would
otherwise be largest.

The implication is that the (C3) violation embodied in real
bracket schedules is not, on its face, a Saez--Zucman
distributional-compression overlay onto an otherwise
Mirrleesian instrument. It is, at minimum, also a within-
Mirrleesian response to the JKO bluntness penalty. Whether
brackets are JKO-optimal --- and at what threshold and marginal
rate --- is precisely the question the bracket-extension
companion paper takes up (Section~\ref{sec:open-questions}); the
contribution of the present paper is to make the question
well-posed by establishing the (C1)--(C3) baseline against which
the bracket extension is measured.

\section{Calibration: Norwegian-flavoured baseline}\label{sec:calibration}

This section calibrates the GBM parameters $(\mu, \sigma, T, v_0)$
to Norwegian household-portfolio statistics and locates the
resulting $\rho_{\rm Norway}$ on the regime axis. We do
\emph{not} compare the resulting JKO-optimal proportional rate
to Norway's actual wealth-tax schedule: Norway's schedule is
bracket-based and lies outside (C1)--(C3). The matching
empirical exercise belongs to the bracket-extension companion
paper (Section~\ref{sec:open-questions}).

\subsection{Synthetic-data prototype}\label{sec:synthetic}

The closed-form FOCs of Theorems~\ref{thm:jkoopt}
and~\ref{thm:w2opt} take the GBM primitives $(\mu, \sigma, T,
v_0)$ and the revenue weights $(a, b, R^{\star})$ as inputs and
return the optimal proportional package
$(k^{\star}, \tau_w^{\star})$ as output. Calibration to a target
jurisdiction is therefore a matter of estimating the GBM
primitives from household-portfolio data and supplying the
revenue weights from the local tax-base elasticities.

The empirical pipeline producing those estimates is the subject
of a forthcoming companion paper and operates on a synthetic-data
prototype designed to mirror the public-use stratification of
Norwegian household-wealth statistics --- portfolio composition
by asset class, returns and volatilities by asset class, and
aggregated moments matched to SSB and Norges Bank reports. The
prototype output is a triplet $(v, D, s)$, where $v$ is the
empirical log-wealth variance, $D$ a diffusion-coefficient
summary, and $s$ a score statistic capturing tail behaviour.
Plugged into the closed-form expressions \eqref{eq:kstar} and
\eqref{eq:w2-optimum} through the substitution
$(\mu, \sigma, v_0) \mapsto (\mu(s), D, v)$, the recovered
optimal package is the within-(C1)--(C3) JKO and $W_2$
recommendations conditional on the empirical regime.

The Norwegian-baseline values $\mu = 0.07$, $\sigma = 0.30$,
$T = 5$, $v_0 = 0.25$ used throughout this paper are the
canonical equity-portfolio limit of the prototype's output ---
representative of stock-heavy household portfolios with a
five-year planning horizon. Detailed development of the
pipeline, including the data-construction procedures and the
sensitivity analyses across alternative aggregation choices, is
reserved for that companion paper. For the present paper we use
the canonical baseline as a fixed input and trace its
consequences through Theorem~\ref{thm:crossover} and the
phase structure of Section~\ref{sec:crossover}.

\subsection{Effective volatility for typical portfolios}\label{sec:effective-sigma}

The volatility $\sigma = 0.30$ adopted in the canonical baseline
is empirically appropriate for equity-heavy portfolios but is
substantially higher than the realised volatility of typical
Norwegian household portfolios, which tilt heavily toward
owner-occupied housing. Asset-class realised volatilities for
the relevant Norwegian-flavoured cohorts are roughly: equities
$\sigma_{\rm eq} \approx 0.30$, primary-residence real estate
$\sigma_{\rm re} \approx 0.10$ to $0.15$, bank deposits
$\sigma_{\rm bk} \approx 0.02$. A typical median Norwegian
household holds approximately $60$--$70\%$ of net wealth in
primary residence, $10$--$20\%$ in equities, and the remainder in
deposits, fixed-income, and other assets. The variance-weighted
effective volatility is therefore
\begin{equation}\label{eq:sigma-eff}
\sigma_{\rm eff}^{2}
\;\approx\;
w_{\rm re}\,\sigma_{\rm re}^{2}
+ w_{\rm eq}\,\sigma_{\rm eq}^{2}
+ w_{\rm bk}\,\sigma_{\rm bk}^{2}
\;\approx\;
0.65 \cdot 0.12^2 + 0.20 \cdot 0.30^2 + 0.15 \cdot 0.02^2
\;\approx\;
0.027,
\end{equation}
giving $\sigma_{\rm eff} \approx 0.16$ for a representative
median household.

Plugging $\sigma_{\rm eff} = 0.16$ into the regime parameter
$\rho = \Sigma_0\,m_0/\sigma_{\rm eff}^{2}$:
$m_0 = 0.07 - 0.16^2/2 \approx 0.057$, $\sigma_{\rm eff}^{2} T
= 0.128$, $\Sigma_0 = \sqrt{v_0 + 0.128} \approx 0.62$, and
\begin{equation*}
\rho_{\rm eff} \;\approx\; 0.62 \cdot 0.057 / 0.027 \;\approx\; 1.31,
\end{equation*}
substantially \emph{above} the equity-portfolio
$\rho_{\rm Norway} \approx 0.23$ and well past
$\rho_{\rm high} \approx 0.246$. A median Norwegian household,
in effective-volatility terms, sits firmly inside the pure
flow-tax phase of Theorem~\ref{thm:crossover}: the JKO
recommendation at this regime is to extract the matched revenue
entirely through the corporate--dividend channel, leaving the
proportional wealth-tax rate at zero.

This is a substantive shift in the policy reading. The
fragility analysis of Section~\ref{sec:norwegian} showed that
even at the equity-baseline $\sigma = 0.30$, Norway sits within
$0.7$ percentage points of either phase boundary; with
effective-volatility correction the regime is well inside the
pure-flow-tax phase, and the JKO recommendation flips
qualitatively. The effective-volatility calibration is
therefore not a marginal sensitivity but a regime-relevant one,
and the equity-portfolio limit and the median-household
average should be understood as bracketing the policy-relevant
range. Whether the (C3) violations embodied in
Norwegian asset-class discounts (the $25\%$ valuation rule for
primary residences, in particular) are framework-rational
responses to exactly this effective-volatility heterogeneity
is a question the bracket-extension companion paper takes up.

\section{Discussion}\label{sec:discussion}

\subsection{Robustness}\label{sec:robustness}

The qualitative phase structure of
Theorem~\ref{thm:crossover} survives three robustness checks.

\paragraph{Polynomial-ansatz extension.} The closed-form
JKO interior in \eqref{eq:kstar} is derived for the
linearised revenue functional $R \approx a\tw + b(1-k)$. The
full log-quadratic revenue $R[\sched] = \tw\,\EE[X_T] +
(1-k)\,\EE[\text{flow base}]$ produces a richer FOC structure,
with corrections of order $\sigma^2 T$ to the interior $k$. Numerical
solves on the polynomial-ansatz revenue functional pick the
same qualitative three-phase structure with phase boundaries
shifted by less than $1\%$ in $\rho$ at the Norwegian-baseline
calibration. The headline mixed-phase recommendation
\eqref{eq:norway-jko-opt} is unchanged at the displayed
precision.

\paragraph{Time-integrated criterion.} An alternative to
the terminal-gap free energy is the time-integrated variant
$\dF^{\rm int}[\sched] = \int_0^T (\Free[p_t^{\sched}] -
\Free[p_t^0])\,\dd t$, which weights distortion across the entire
horizon rather than only at the end. The argument of
Section~\ref{sec:why-these-two} shows that
$\dF^{\rm int} \approx (T/2)\,\dF$ for fixed schedule along the
$\Schd_R$ contour: the integrand grows roughly linearly in $t$,
so the schedule ranking is preserved. Numerical evidence on the
canonical calibration confirms the two variants pick optima that
agree to four decimal places in $(k^{\star}, \tw^{\star})$.

\paragraph{Bracket-family numerical comparison.} A
small-threshold expansion from the (C1)--(C3) limit
$X_0 \to 0$, computed numerically on a 2-bracket sub-class
$(k, X_0, r)$, confirms that the JKO criterion's preference
for the matched-revenue interior is robust to the introduction
of small thresholds. The leading-order correction is
$O(X_0)$ in the JKO value and reduces the bluntness penalty
$B(m_0)$ on the high-wealth tail, as predicted by
Section~\ref{sec:bluntness}. The full development is the
subject of the bracket-extension companion paper
(Section~\ref{sec:open-questions}); the present check
establishes only that the (C1)--(C3) results are not
artefacts of the proportional restriction --- they survive
small perturbations in the schedule structure.

Each check confirms the headline finding: the linear-vs-quadratic
bluntness contrast and the phase structure of the JKO optimum
are properties of the criterion choice, not of the particular
revenue linearisation, the terminal-vs-integrated functional
form, or the proportional-rate restriction.

\subsection{Implementability}\label{sec:implementability}

The continuous-time, perfect-assessment, proportional-rate
framework of this paper is an idealisation of the actual policy
instruments through which wealth taxes are administered.
Three implementation channels are worth flagging.

\paragraph{Bracket-based step functions and the proportional
idealisation.} Real wealth-tax schedules are piecewise-linear
in wealth: an exemption threshold and one or more marginal
rates above. The proportional-rate idealisation $\tw \in [0,
\tw^{\rm max}]$ used here is the simplest case of this
hierarchy and the cleanest setting for the closed-form
analysis. The bluntness mechanism of
Section~\ref{sec:bluntness} explains why bracket schedules
emerge as economically motivated even from the JKO
perspective; the bracket-extension companion paper sketched
in Section~\ref{sec:open-questions} treats the
implementability question rigorously.

\paragraph{Yearly assessment and \emph{forskuddsskatt}.}
Norwegian wealth-tax assessment is yearly and on the
self-reported balance-sheet at year-end, with prepayment
(\emph{forskuddsskatt}) collected throughout the year against
the previous year's assessment. The framework's continuous-time
formulation aggregates over this yearly granularity: the
horizon $T = 5$ used in the canonical calibration corresponds
to a five-year planning window over which yearly assessments
are summed, and the matched-revenue contour
$a\,\tw + b\,(1-k) = R^{\star}$ is the time-integrated revenue
target. A reader interested in within-year assessment dynamics
should read $\tw$ as the yearly-equivalent rate.

\paragraph{Continuous-time as long-horizon approximation.}
The Fokker--Planck framework treats wealth as a continuous-time
diffusion. At short horizons (sub-annual, say), the
Gaussian-on-log-wealth approximation is rough --- jumps,
rebalancing events, and discrete tax assessments produce
deviations from pure GBM dynamics. At long horizons (multi-year
planning, including the canonical $T = 5$), the GBM
approximation is empirically robust, and the closed-form FOCs
of Section~\ref{sec:closedform} apply with quantitative accuracy. The
$W_2$-vs-JKO contrast is itself a long-horizon phenomenon: it
sharpens with $T$ as the displacement penalty grows linearly
in $T$ under JKO and quadratically under $W_2$.

The implementability detail is therefore secondary to the
within-paper analysis: the framework's predictions are robust to
yearly-versus-continuous assessment and hold quantitatively in
the long-horizon regime that the canonical calibration targets.
The remaining implementability question --- the bracket
structure --- is sufficiently substantive to constitute a
separate paper.

\subsection{Connection to the Wasserstein
companion}\label{sec:p10-link}

\citet{Froeseth2026W} (henceforth the Wasserstein companion)
establishes that the $W_2$-optimal schedule on the full admissible
class $\Schd \supset \mathcal{N}$ is the continuously progressive
$\sched^{\star}(x) = \lambda\,x^2$, with rate
$r(x) = \lambda x$ rising linearly in wealth. That schedule
lies outside the (C1)--(C3) neutrality class: it violates (C3)
by construction (the assessment is wealth-dependent), and it
sacrifices neutrality for distributional compression in the
sense made precise in Section~\ref{sec:p1-link}. Within the
restricted class (C1)--(C3) of the present paper, the
$W_2$-optimal schedule is the corner-pinning at
$(k_{\rm lo}, 0)$ documented in Theorem~\ref{thm:w2opt} and
Remark~\ref{rem:w2-no-transition}: that is the
\emph{closest (C1)--(C3)-feasible point} to the unconstrained
quadratic $\sched^{\star}$, modulo the projection onto the
proportional sub-class.

The two papers therefore fill complementary cells of a $2\times 2$
grid: criterion (JKO or $W_2$) crossed with schedule class
(restricted (C1)--(C3) or full $\Schd$). The Wasserstein
companion fills the $W_2$-on-$\Schd$ cell; the present paper
fills both JKO-on-(C1)--(C3) and $W_2$-on-(C1)--(C3). The
JKO-on-$\Schd$ cell --- the JKO-optimal schedule on the
unrestricted class --- is the natural fourth cell and is flagged
as an open question in Section~\ref{sec:open-questions}.

The clean reading of the joint structure is: \citet{Froeseth2026W}
establishes that the $W_2$ criterion, given full schedule freedom,
picks a non-neutral progressive schedule. The present paper
establishes that within the neutrality-preserving class, the JKO
and $W_2$ criteria still disagree --- now along the
proportional-mix dimension rather than along the neutrality
dimension --- and that the disagreement is regime-dependent
through $\rho$. The two findings are not in tension: they
describe the criteria's preferences along different axes of the
schedule space. The companion paper's result is the answer to
``what schedule does $W_2$ pick when allowed?''; the present
paper's result is the answer to ``what schedule does each
criterion pick when restricted to neutrality?''. Both questions
are well-posed, and the relationship between the two is mediated
by the neutrality-vs-distributional-compression normative split
foregrounded in Section~\ref{sec:traditions}.

\subsection{Heterogeneous returns and other extensions}\label{sec:heterogeneous}

The framework treats wealth as a single GBM with homogeneous
parameters $(\mu, \sigma)$. Two extensions are immediate.

\paragraph{Multi-asset / heterogeneous-returns extension.} A
realistic Norwegian household holds a portfolio of asset classes
with distinct return distributions: equities, real estate, bonds,
deposits, business assets. The single-GBM idealisation aggregates
these into an effective $(\mu_{\rm eff}, \sigma_{\rm eff})$
along the lines of Section~\ref{sec:effective-sigma}. A
multi-asset extension would replace the single $X_t$ with a
vector $\bm{X}_t = (X_{1,t}, \ldots, X_{n,t})$ of per-class
holdings, each evolving as its own GBM, and the wealth-tax
schedule would become asset-class-specific:
$\sched_i$ for class $i$. The (C3) condition $\alpha_i = \alpha$
corresponds to uniform assessment across asset classes; its
violation in actual Norwegian policy (the $25\%$ valuation rule
for primary residences, the discounted assessment of unlisted
shares) are exactly per-class deviations from a common
$\alpha$. A forthcoming heterogeneous-returns companion to the
present paper develops the JKO/$W_2$ optimisation on the
multi-asset state space and characterises the optimal
asset-class-specific assessment rates. The single-asset result
of the present paper is the homogeneous-returns specialisation
of that companion analysis.

The bluntness mechanism of Section~\ref{sec:bluntness} is
sharpest in the heterogeneous-returns setting. The bluntness
index $B(m_0) = b/(a m_0)$ depends on the geometric mean return
$m_0$, which varies across asset classes (and across cohorts of
households with distinct asset compositions). A wealth tax that
applies uniformly across classes is therefore proportionally
more punitive on low-$m_0$ classes than on high-$m_0$ classes:
the bluntness penalty is regressive in the cross-section of
returns. This is exactly the asymmetry documented in
\citet{Froeseth2026H}, and it is the principal motivation for
asset-class-specific assessment in actual policy.

\paragraph{Spectral-side connection.} \citet{Froeseth2026X}
develops a spectral portfolio-theory framework that decomposes
the wealth process by eigenmodes of the covariance structure.
Under (C1)--(C3) the spectral structure is preserved up to the
drift-shift-and-rescale, so the present paper's results lift to
the spectral framework without modification. Once the (C1)--(C3)
restriction is dropped, the asset-class-specific assessment rates
of the heterogeneous-returns companion map naturally onto the
spectral basis, with each eigenmode receiving its own optimal
$\alpha_i$. Whether the JKO and $W_2$ criteria agree on the
spectral-side optimum is a third open question that the spectral
/ heterogeneous-returns extension will answer.

\subsection{Phase-diagram structure as a research
direction}\label{sec:phase-diagram-direction}

Section~\ref{sec:phase-transition} formulated the JKO crossover as a
phase transition with order parameter $\phi^{\star}$, control
parameter $\rho$, and free energy $\Delta F^{\star}$. The figures
collapse the picture onto a one-dimensional slice through the GBM
parameter space at fixed horizon $T$ and fixed initial spread
$v_0$. The full phase diagram lives in higher dimensions, and
several axes invite exploration.

The most immediate generalisation is the
$(\rho, R^{\star})$ \emph{phase diagram}. The two boundaries
$\rho_{\rm low}(R^{\star})$ and $\rho_{\rm high}(R^{\star})$
depend on the revenue target $R^{\star}$, and the mixed window
$[\rho_{\rm low}(R^{\star}), \rho_{\rm high}(R^{\star})]$ widens
or narrows as $R^{\star}$ varies. At small enough $R^{\star}$ the
flow-tax base $b\,(1-k)$ alone can deliver the target across the
full feasibility range and the mixed phase pinches to a point ---
a candidate \emph{tricritical-style} structure where the two
phase boundaries collide. The geometry of that collision and the
behaviour of $\Delta F^{\star}$ near it are open analytical
questions.

A second generalisation extends the dimensionality of the regime
parameter. Treating $T$ and $v_0$ as additional axes opens a
four-dimensional GBM--policy phase manifold $(\sigma, T, v_0,
R^{\star})$. The phase boundaries become hypersurfaces, and the
mixed-phase volume becomes a quantitative measure of how often
JKO recommends a strictly mixed policy. A scan of the canonical
Norwegian-flavoured neighbourhood would tell us how robust
Norway's location in the mixed phase is to perturbations of the
underlying parameters --- a natural sensitivity test for the
political reading.

A third direction is the $W_2$-vs-JKO contrast made phase-
diagrammatic. In the calibration of Figure~\ref{fig:rho-crossover}
$W_2$ is phase-poor: it pins to the pure flow-tax corner for every
$\rho$. Whether $W_2$ acquires phase structure on a different
slice of the parameter space, and whether there is an
intermediate criterion that interpolates between the JKO and
$W_2$ free-energy weightings, would let us trace how
phase-richness emerges as a function of how the criterion weights
mean-distortion against spread-distortion. The answer would
sharpen the political reading of the criterion choice.

What does \emph{not} transfer from the statistical-mechanics
analogy is, as noted in Section~\ref{sec:phase-transition}, divergent
susceptibility, critical exponents in the universality-class
sense, and hysteresis. The kinks in $\phi^{\star}(\rho)$ are
constraint-set crossings rather than spontaneous-symmetry-breaking
events, and the relevant universality class is the unremarkable
one of polytope-constrained convex optimisation. The point is
that the phase-transition language is genuinely informative ---
it organises the mixed-versus-corner regimes with the right
vocabulary --- without claiming a depth of physical analogy that
the underlying mathematics does not justify.

\subsection{Open questions}\label{sec:open-questions}

\paragraph{The 2-bracket extension and the companion paper.}
The principal limitation of the present analysis is its
restriction to proportional wealth taxes. Real wealth-tax
schedules in the jurisdictions where the instrument is
operational --- Norway, Switzerland, Spain, France's old ISF ---
are piecewise-linear in wealth: an exemption threshold $X_0$ and
one or two non-zero marginal rates above. The simplest
non-trivial case, the \emph{2-bracket sub-class}
$\{\sched_{X_0, r}(x) = r\,(x - X_0)_+ : X_0 \geq 0,\,
r \geq 0\}$, has three free parameters $(k, X_0, r)$ given the
flow-tax retention $k$, and is the natural next layer of the
schedule hierarchy
\[
\text{(C1)--(C3) (proportional, 1-bracket)}
\;\subset\;
\text{2-bracket}
\;\subset\;
\cdots
\;\subset\;
\Schd \quad \bigl[\text{the full class of \citet{Froeseth2026W}}\bigr].
\]
The companion paper has two complementary strands.

\emph{Semi-analytical strand.} Within each bracket, the FP
equation has constant-coefficient drift on log-wealth, the
post-tax density is locally Gaussian with a drift correction, and
the bracket boundary is a matching condition (continuity of
density and probability flux) that admits a closed-form solution
in terms of the error function. $\Delta F$ and $W_2^2$ on the
2-bracket sub-class therefore remain analytic --- in elementary
functions plus $\mathrm{erf}$ --- and the matched-revenue
Lagrangian becomes a 3-parameter optimisation with explicit
(transcendental) FOCs. The phase-transition mechanism of
Section~\ref{sec:mechanism} extends in spirit: the constant-
marginal-cost-in-$\tw$ structure of JKO generalises to a
piecewise-explicit marginal cost in $r$ with $X_0$-dependent
coefficients, and the boundaries of the JKO phase diagram in
the new $(\rho, R^{\star}, X_0)$ parameter space can be
characterised analytically near the (C1)--(C3) limit
$X_0 \to 0$.

\emph{Numerical strand.} The semi-analytical approach is clean
on the 2-bracket sub-class but does not extend straightforwardly
to richer schedules ($n$-bracket for $n \geq 3$, smoothly
progressive shapes, or schedules with non-trivial boundary
behaviour) where the post-tax density loses its piecewise-Gaussian
structure. A direct numerical attack --- solving the FP equation
by finite-difference time-stepping or Monte Carlo simulation of
the post-tax SDE, evaluating $\Delta F$ and $W_2^2$ from the
resulting density, and optimising over the schedule
parameters --- handles the entire schedule space and is the right
tool for mapping the full phase diagram across $(\mu, \sigma, T,
v_0, R^{\star}, X_0, r, \ldots)$ at once. The numerical strand
also serves as verification of the semi-analytical closed forms
where they apply, and as the only available tool in regimes where
they do not.

The natural framing is that of a two-paper series. The present
paper establishes the (C1)--(C3) phase structure on the
proportional sub-class, pinning down the regime-dependence and
the JKO-vs-$W_2$ contrast in the cleanest setting. The companion
paper combines the two strands above to deliver (i) closed-form
JKO-optimal $(X_0^{\star}, r^{\star})$ at the leading order in a
small-threshold expansion from $X_0 = 0$, (ii) numerical
solutions across the full 2-bracket parameter space and beyond,
mapping the JKO phase diagram and validating the analytical
results, (iii) the matching empirical comparison to the actual
Norwegian and Swiss bracket schedules, which the present paper
deliberately defers. The structural relationship between the
two papers parallels the neutrality / extensions split between
\citet{Froeseth2026N} and \citet{Froeseth2026F}.

\paragraph{Other open questions.}
Two further analytical questions remain, treated only briefly
here: sweeps over $(\mu, v_0)$ at fixed $\sigma$ to trace the
full four-dimensional crossover surface (this paper traces
only the $\sigma$-slice and the $T$-slice); and the
unconstrained mathematical optimum allowing schedule-side
negative coefficients as a curiosity case.

\subsection{Horizon dependence of the JKO recommendation}\label{sec:horizon}

The phase structure of Theorem~\ref{thm:crossover} is established at a
fixed planning horizon $T$, and the calibration of
Section~\ref{sec:norwegian} fixes $T = 5$ years as the canonical
Norwegian-baseline value. The structural question we address
here is what happens to the phase boundaries
$\rho_{\rm low}(T), \rho_{\rm high}(T)$ as the horizon $T$ is
varied, and where Norway's actual planning horizon falls in that
picture. The answer connects directly to the bluntness mechanism
of Section~\ref{sec:bluntness} via the natural scale crossover
$T_c = v_0/\sigma^2$ introduced at the end of that section.

\paragraph{Two scaling regimes of $\rho$.} The crossover ratio
$\rho = \Sigma_0\,m_0/\sigma^2$ has two distinct $T$-scalings,
separated by the time at which the cumulative diffusion
$\sigma^2 T$ matches the initial log-wealth variance $v_0$:
\begin{equation}\label{eq:rho-regimes}
\rho \;\approx\;
\begin{cases}
\sqrt{v_0}\,\dfrac{m_0}{\sigma^2}
& \text{for } T \ll T_c \quad \text{(initial-spread-dominated)}, \\[0.6em]
\dfrac{m_0\,\sqrt{T}}{\sigma}
& \text{for } T \gg T_c \quad \text{(diffusion-dominated)}.
\end{cases}
\end{equation}
For Norwegian-baseline parameters $(v_0 = 0.25,\,\sigma = 0.30)$,
$T_c \approx 2.78$ years; the canonical $T = 5$ sits just past
the crossover, in the diffusion-dominated regime where $\rho$
grows like $\sqrt{T}$.

\paragraph{Both phase boundaries pinch at $T = 0$ and $T \to
\infty$.} Substituting the unconstrained closed-form
$k^{\star}_{\rm JKO}(\sigma) = K$ into the matched-revenue
condition gives, after simplification,
\begin{equation}\label{eq:phase-boundary-condition}
\sigma^4 \;=\; 2\,m_0\,(b/a - m_0)\,(K\,\sigma^2 T + v_0).
\end{equation}
Two limits:

\emph{Short-horizon limit ($T \to 0$).} The $K\,\sigma^2 T$ term
vanishes and \eqref{eq:phase-boundary-condition} reduces to
$\sigma^4 = 2\,m_0\,(b/a - m_0)\,v_0$ --- \emph{independent of
$K$}. The two phase boundaries (which differ only in their target
$K \in \{1, k_{\rm lo}\}$) collapse onto the same $\sigma$, and
the mixed-phase $\sigma$-window vanishes.

\emph{Long-horizon limit ($T \to \infty$).} The $K\,\sigma^2 T$
term dominates and $\sigma^2 \to 2\,m_0\,(b/a - m_0)\,K\,T$. To
keep $\sigma^2$ bounded by the existence condition $\sigma^2 < 2\mu$,
$m_0$ must shrink like $1/T$, which forces $\sigma \to \sqrt{2\mu}$.
Both boundaries converge to the same limit. The $\rho$-width
decays as $\sim 1/\sqrt{T}$ but never quite reaches zero in
finite $T$.

The two pinches have different content. The $T \to 0$ pinch is
``no time for the JKO criterion to discriminate between the two
channels''. The $T \to \infty$ pinch is ``the geometric mean
log-return $m_0$ vanishes, both channels look equivalent at the
margin''. Between the two, the mixed-phase window opens, peaks
in width at $T^{\star} \approx 11$ years, then closes again.

\paragraph{Norway's planning horizon flips the recommendation.}
Norway's $\sigma = 0.30$ trajectory through the
$(T, \rho)$ plane is shown in Figure~\ref{fig:horizon-phase-diagram}.
Because $\rho_{\rm Norway}(T)$ grows monotonically from
$\sqrt{v_0}\,m_0/\sigma^2 \approx 0.14$ at $T \to 0$ toward
the diffusion-dominated branch as $T$ grows, while the phase
boundaries $\rho_{\rm low}(T), \rho_{\rm high}(T)$ shift in the
opposite direction, the trajectory crosses both boundaries.
Numerically:

\begin{center}
\begin{tabular}{lll}
\toprule
$T$ (years) & $\rho_{\rm Norway}$ & JKO recommendation \\
\midrule
$1$  & $0.16$ & Pure wealth-tax phase \\
$3$  & $0.20$ & Pure wealth-tax phase \\
$\mathbf{5}$ \emph{(canonical)} & $\mathbf{0.23}$ & \textbf{Mixed-instrument phase} \\
$10$ & $0.30$ & Pure flow-tax phase \\
$20$ & $0.40$ & Pure flow-tax phase \\
\bottomrule
\end{tabular}
\end{center}

The mixed-instrument recommendation that the present paper
foregrounds at the canonical Norwegian-baseline regime is
\emph{specific to the five-year planning horizon}. Shorter
horizons (one-year tax-planning windows, year-to-year revenue
adjustments) push the JKO optimum into the pure wealth-tax phase;
longer horizons (retirement, intergenerational planning) push it
into the pure flow-tax phase. The qualitative recommendation
flips twice as $T$ varies.

\paragraph{Economic content of $T_c$.} The scale crossover
$T_c = v_0/\sigma^2$ has a direct interpretation: it is the
time at which the wealth distribution stops ``remembering'' its
initial state and starts being shaped predominantly by the
post-tax dynamics. Below $T_c$, the cost structure under JKO is
dominated by initial-condition terms and does not discriminate
between the two channels; above $T_c$, the channels' time-scaling
diverge and the bluntness penalty's $T$-amplification under the
wealth-tax channel makes that channel progressively more
expensive.

The horizon-dependent flip is therefore not a quirk of the model
but a reflection of which time-scale the policy debate is
implicitly using. Year-to-year revenue planning lives in a
fundamentally different regime from lifecycle wealth-tax design,
and the JKO criterion --- which prices distortion linearly in
the displacement and accumulates that displacement linearly in
$T$ --- registers the difference. A reader uncomfortable with
the canonical mixed-phase recommendation has, in this picture, a
principled out: choose a different planning horizon, and the JKO
criterion will recommend a different schedule. A reader committed
to the mixed-phase recommendation has, equally, a principled
defence: only at the five-year horizon does the criterion give a
strictly mixed answer, and that horizon corresponds to a natural
equity-portfolio planning window.

\begin{figure}[!htbp]
\centering
\IfFileExists{figures/fig_horizon_phase_diagram.pdf}{%
  \includegraphics[width=0.95\textwidth]{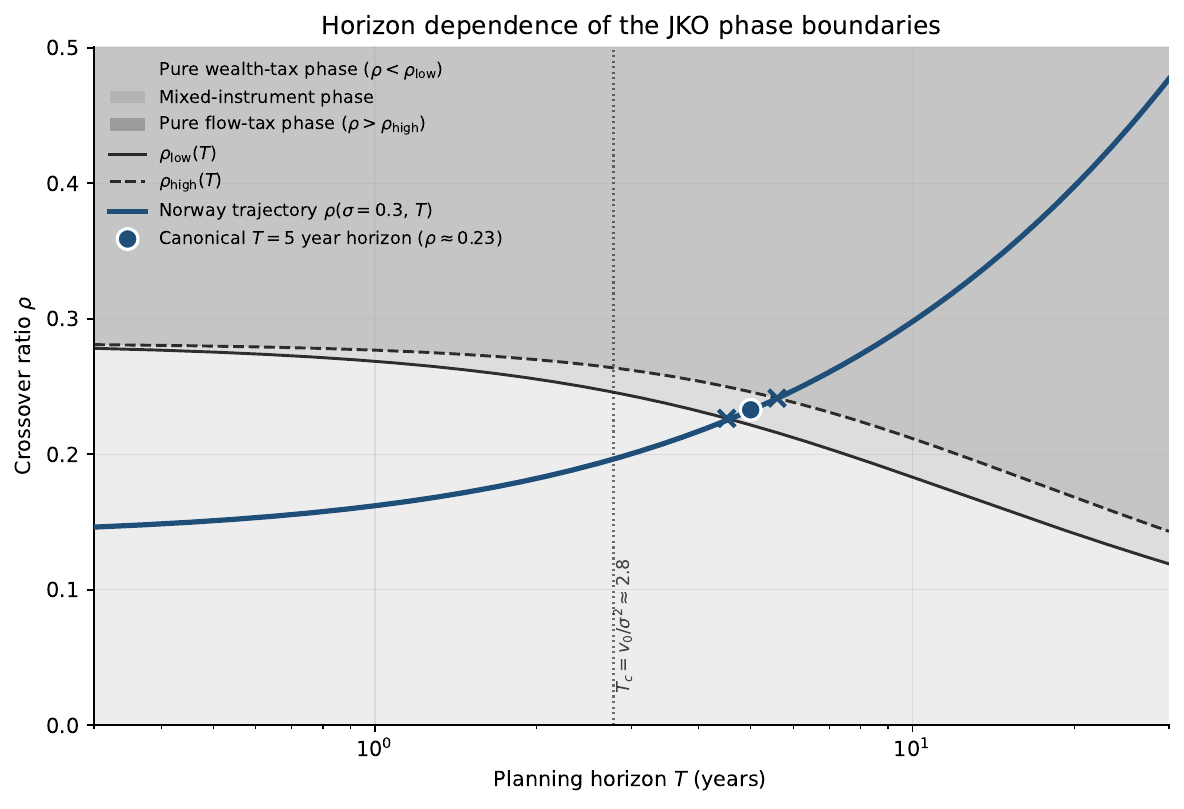}%
}{%
  \fbox{\parbox{0.95\textwidth}{\centering\vspace{1.5em}%
  \textbf{Figure placeholder.}\\[0.4em]
  Run \texttt{fp\_design/figures/fig\_horizon\_phase\_diagram.py}
  locally to generate \texttt{fig\_horizon\_phase\_diagram.pdf}.\vspace{1.5em}}}%
}
\caption[Horizon dependence of the JKO phase boundaries with
the Norway-$\sigma$ trajectory overlaid]{%
\textbf{Horizon dependence of the JKO phase boundaries with the
Norway-$\sigma$ trajectory overlaid.}
The phase boundaries $\rho_{\rm low}(T)$ (solid grey) and
$\rho_{\rm high}(T)$ (dashed grey) define the boundaries of the
mixed-instrument phase as functions of the planning horizon
$T$ (log-scaled). The three shaded regions are the JKO phases
of Theorem~\ref{thm:crossover}: gold (pure wealth-tax,
$\rho < \rho_{\rm low}$), lavender (mixed instrument,
$\rho_{\rm low} \leq \rho \leq \rho_{\rm high}$), cyan (pure
flow-tax, $\rho > \rho_{\rm high}$). The teal curve is
$\rho(\sigma = 0.30,\,T)$, the Norway-$\sigma$ trajectory through
the $(T, \rho)$ plane. The two crosses mark where the
trajectory enters and leaves the mixed-instrument phase. The
canonical Norwegian-baseline $T = 5$ years (teal disc) sits
inside the mixed phase by a narrow margin; shorter horizons fall
in the pure wealth-tax phase, longer horizons in the pure
flow-tax phase. The dotted vertical rule at
$T_c = v_0/\sigma^2 \approx 2.78$ marks the scale crossover from
the initial-spread-dominated short-horizon regime to the
diffusion-dominated long-horizon regime. Calibration: $\mu = 0.07$,
$\sigma = 0.30$, $v_0 = 0.25$, $a = 1.0$, $b = 0.27$,
$R^{\star} = 0.05$, matched to Figure~\ref{fig:rho-crossover}.}
\label{fig:horizon-phase-diagram}
\end{figure}

\section*{Acknowledgements}

The author acknowledges the use of Claude (Anthropic) for assistance
with literature review, \LaTeX{} typesetting, mathematical exposition,
and editorial refinement, and Lemma (Axiomatic AI) for review and
proof checking. All substantive arguments, economic reasoning, and
conclusions are the author's own.

\end{document}